\newtheorem{theorem}{Theorem}
\newtheorem{proposition}[theorem]{Proposition}
\newtheorem{lemma}[theorem]{Lemma}
\theoremstyle{definition}
\theoremstyle{remark}
\newtheorem{remark}[theorem]{Remark}
\newtheorem{remarks}[theorem]{Remarks}
\newcommand{\C}{\mathbb{C}}
\newcommand\eps\varepsilon
\newcommand{\R}{\mathbb{R}}
\renewcommand\Re{\mathop{\mathrm{Re}}\nolimits}
\newcommand\aj{a^{(j)}}
\DeclareMathOperator{\Tr}{Tr}
\begin{document}

\title[The generalized Wehrl entropy bound]{The generalized Wehrl entropy bound\\ in quantitative form}

\author{Rupert L. Frank}
\address[Rupert L. Frank]{Mathe\-matisches Institut, Ludwig-Maximilans Universit\"at M\"unchen, The\-resienstr.~39, 80333 M\"unchen, Germany, and Munich Center for Quantum Science and Technology, Schel\-ling\-str.~4, 80799 M\"unchen, Germany, and Mathematics 253-37, Caltech, Pasa\-de\-na, CA 91125, USA}
\email{r.frank@lmu.de}

\author{Fabio Nicola}
\address[Fabio Nicola]{Dipartimento di Scienze Matematiche, Politecnico di Torino, Corso Duca degli Abruzzi 24, 10129 Torino, Italy}
\email{fabio.nicola@polito.it}

\author{Paolo Tilli}
\address[Paolo Tilli]{Dipartimento di Scienze Matematiche, Politecnico di Torino, Corso Duca degli Abruzzi 24, 10129 Torino, Italy}
\email{paolo.tilli@polito.it}

\renewcommand{\thefootnote}{${}$} \footnotetext{\copyright\, 2023 by the authors. This paper may be reproduced, in its entirety, for non-commercial purposes.\\
Partial support through US National Science Foundation grant DMS-1954995 (R.L.F.), as well as through the Deutsche Forschungsgemeinschaft (DFG, German Research Foundation) through Germany’s Excellence Strategy EXC-2111-390814868 (R.L.F.) is acknowledged.}

\begin{abstract}
	Lieb and Carlen have shown that mixed states with minimal Wehrl entropy are coherent states. We prove that mixed states with almost minimal Wehrl entropy are almost coherent states. This is proved in a quantitative sense where both the norm and the exponent are optimal and the constant is explicit. We prove a similar bound for generalized Wehrl entropies. As an application, a sharp quantitative form of the log-Sobolev inequality for functions in the Fock space is provided. 
\end{abstract}

\maketitle

\section{Introduction and main result}

We consider the following $L^2(\R)$-normalized Gaussian
$$
\phi(x) := 2^{1/4} e^{-\pi x^2} \,,
\qquad\text{for all}\ x\in\R \,,
$$
as well as its translations and modulations, for $(x_0,\omega_0)\in\R^2$,
$$
\phi_{(x_0,\omega_0)}(x) = e^{2\pi i\omega_0x} \phi(x-x_0) 
\qquad\text{for all}\ x\in\R \,.
$$
In various contexts in mathematics and its applications one arrives at the operator $\mathcal V$ that transforms a function $f\in L^2(\R)$ into a function $\mathcal Vf$ on $\R^2$, defined by
$$
(\mathcal V f)(x_0,\omega_0) = \int_{\R} \overline{\phi_{(x_0,\omega_0)}(x)} f(x)\,dx \,.
$$
The operator $\mathcal V$ is known as coherent state transform in mathematical physics and quantum mechanics and as short-time Fourier transform in signal processing and time-frequency analysis. It is easy to see that $\mathcal Vf\in L^2(\R^2)$ with
$$
\iint_{\R^2} |\mathcal Vf(x_0,\omega_0)|^2 \,dx_0\,d\omega_0 = \int_\R |f(x)|^2\,dx \,,
$$
so for $L^2$-normalized $f$, $|\mathcal Vf|^2$ can be viewed as a probability density on $\R^2$.

More generally, consider a ``density matrix", namely an operator $\rho\geq0$ on $L^2(\mathbb{R})$ such that ${\rm Tr}\, \rho=1$. The function
\[
u_\rho(x,\omega):=\langle \phi_{(x,\omega)},\rho \phi_{(x,\omega)}\rangle 
\]
is known as the Husimi function, covariant, or lower symbol of $\rho$. It is still a probability density. Moreover if $\rho$ is the orthogonal projection on some $L^2(\mathbb{R})$-normalized function $f$ - we write $\rho=|f\rangle \langle f|$ - we have $u_\rho(x,\omega)=|\mathcal{V} f(x,\omega)|^2$. \par\medskip

In the late 1970s Wehrl \cite{We} suggested a certain quantity, based on the covariant symbol of $\rho$, as a measure of the entropy of a quantum state and demonstrated several interesting properties of it. We refer to \cite{Sc} for a review of this topic. Precisely, Wehrl's entropy of the density matrix $\rho$ is defined as
$$
- \iint_{\R^2} u_\rho(x_0,\omega_0) \ln u_\rho(x_0,\omega_0)\,dx_0\,d\omega_0 \,,
$$
that is the continuous entropy of the probability density $u_\rho$.

In 1978 Lieb \cite{Li} proved Wehrl's conjecture that this entropy assumes its minimal value for $\rho=|\phi_{(x_0,\omega_0)}\rangle \langle \phi_{(x_0,\omega_0)}|$ where $(x_0,\omega_0)\in\R^2$ is arbitrary. Lieb deduced this in an ingenious fashion from the sharp forms of the Hausdorff--Young and Young inequalities. Later, Carlen \cite{Ca} provided a new proof of Lieb's result and showed that the above pure states are the only ones for which the minimal value is attained.

Our main theorem in this paper is a quantitative version of this result. Specifically, we will show that if $\rho$ has almost minimal Wehrl entropy, then it is close to the above family of pure states in the sense that its squared trace distance to this family of states is bounded from above by the discrepancy in entropy. We set
\begin{equation}\label{eq drho}
D[\rho] := \inf_{x_0, \omega_0\in\R} \|\rho-|\phi_{(x_0,\omega_0)}\rangle \langle \phi_{(x_0,\omega_0)}|\|_{S_1},
\end{equation}
where $\|\cdot\|_{S_1}$ denotes the trace norm.

\begin{theorem}\label{wehrlintro}
	There is a $c_*>0$ such that, for all operators $\rho\geq 0$ on $L^2(\R)$ with ${\rm Tr}\,\rho=1$,
	\begin{align}\label{eq:wehrlintro}
		- \iint_{\R^2} u_\rho(x,\omega) \ln u_\rho(x,\omega) \,dx\,d\omega
		& \geq - \iint_{\R^2} |\mathcal V\phi(x,\omega)|^2\ln|\mathcal V\phi(x,\omega)|^2 \,dx\,d\omega \notag \\
		& \quad + c_* D[\rho]^2 \,.
	\end{align}
\end{theorem}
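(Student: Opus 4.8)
The plan is to reduce everything to the pure-state case and then to exploit a known stability estimate for the associated $L^p$-optimization problem. First I would observe that by the convexity of the map $t \mapsto t \ln t$ and the fact that the Husimi function $u_\rho$ of a mixed state $\rho = \sum_j \lambda_j |f_j\rangle\langle f_j|$ is the convex combination $u_\rho = \sum_j \lambda_j |\mathcal V f_j|^2$, the Wehrl entropy is concave in $\rho$; hence $-\int u_\rho \ln u_\rho \geq \sum_j \lambda_j \bigl(-\int |\mathcal V f_j|^2 \ln |\mathcal V f_j|^2\bigr)$. Combined with a suitable control of $D[\rho]$ by the pure-state discrepancies $D[|f_j\rangle\langle f_j|]$ (using convexity of the trace norm and the elementary inequality $\|\rho - P\|_{S_1} \le \sum_j \lambda_j \| |f_j\rangle\langle f_j| - P\|_{S_1}$, together with a term accounting for the spread of the $\lambda_j$ away from $(1,0,0,\dots)$), the mixed-state bound should follow from the pure-state bound plus a separate argument handling the "how mixed is $\rho$" contribution. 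The cleanest route is: if $\lambda_1$ is the largest eigenvalue, then either $1-\lambda_1$ is bounded below — in which case the strict concavity of the entropy already gives a gap of order $1-\lambda_1 \gtrsim D[\rho]$ — or $\lambda_1$ is close to $1$, in which case $D[\rho] \approx D[|f_1\rangle\langle f_1|]$ up to $O(1-\lambda_1)$ and one invokes the pure-state estimate for $f_1$.

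For the pure-state case, the key is the sharp Wehrl/log-Sobolev link: for $L^2$-normalized $f$, writing $F = \mathcal V f$, the function $|F|^2$ lies in the range of the coherent state transform and satisfies $\|F\|_2 = 1$. The quantity $-\int |F|^2 \ln |F|^2$ is minimized precisely when $F$ is (a translate/modulate of) $\mathcal V \phi$, whose modulus squared is a Gaussian. The natural approach is to differentiate the $L^p$-norm constraint: one knows (this is the content of the sharp Wehrl inequality, equivalently the sharp constant in $\|\mathcal V f\|_p$ for $p>2$, due to Lieb) that $\|\mathcal V f\|_p \le \|\mathcal V \phi\|_p$ with equality only for the Gaussians, and differentiating in $p$ at $p=2$ converts the $L^p$ stability into entropy stability. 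So the real engine must be a \emph{quantitative} version of the statement that $\|\mathcal V f\|_p$ is maximized by $\mathcal V\phi$: an inequality of the form $\|\mathcal V\phi\|_p^p - \|\mathcal V f\|_p^p \gtrsim_p \, d(f, \text{Gaussians})^2$ for $p$ near $2$, with constant controlled uniformly as $p \to 2^+$. This in turn I would attack via the Faber–Krahn-type / rearrangement structure of the problem, or via the Carlen representation of $\mathcal V f$ in terms of an entire function in Fock space, reducing to a stability estimate for the $L^p$ norm of an entire function with a Gaussian weight.

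The main obstacle I anticipate is precisely obtaining the pure-state $L^p$ stability estimate \emph{with the correct quadratic exponent and with a constant that does not degenerate as $p \downarrow 2$}, since naively one would get either the wrong power of the distance or a $p$-dependent constant blowing up in the limit that one needs for the entropy. Overcoming this likely requires a careful spectral-gap analysis: linearizing the functional $\|\mathcal V f\|_p^p$ around the Gaussian, identifying the Hessian (after quotienting out the translation, modulation, and phase symmetries that give the degeneracy directions), and showing its lowest nontrivial eigenvalue is bounded below uniformly for $p \in (2, 2+\delta]$. One must then bootstrap this local estimate to a global one, presumably by a compactness/concentration argument that rules out distant near-maximizers, using that $\mathcal V\phi$ is the \emph{unique} maximizer up to symmetries (Carlen's uniqueness). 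Passing the gap back through the $p$-differentiation to recover the entropy gap $c_* D[\rho]^2$, and checking that the symmetry-reduced trace distance $D[\rho]$ is comparable to the $L^2$ distance appearing naturally in the Fock-space analysis, are the remaining technical points; the constant $c_*$ will emerge non-explicitly from the spectral gap unless a more hands-on argument is available.
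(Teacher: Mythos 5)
Your proposal takes a genuinely different route from the paper, and I believe it has two real gaps that the paper's argument is designed to avoid.

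The paper does not reduce to pure states at all. Its first proof works directly with the covariant symbol $u_\rho$ of the mixed state. The crucial structural fact is that $u_\rho(z)=\sum_j p_j|F_j(z)|^2e^{-\pi|z|^2}$ satisfies the log-subharmonicity bound $\Delta\ln u_\rho\geq -4\pi$ for \emph{any} density matrix $\rho$, not just a pure state (Proposition \ref{pro logsub}; the proof is a Cauchy--Schwarz inequality on the $F_j$'s). Once this is known, the distribution function $\mu(t)=|\{u_\rho>t\}|$ satisfies the same differential inequality $\mu'(t)\leq -1/t$ as in the pure case, and the whole machinery of \cite{NiTi, GGRT} applies verbatim to $u_\rho$. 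The quantitative conclusion then comes from the function $H(t)=\int_0^t(\mu-\mu_0)$, the refined bound $\mu(t)\leq(1+C_0(1-T))\log(T/t)$ of Lemma \ref{lemma Paolo}, and the elementary convexity manipulation with $\Phi'$ in the proof of Theorem \ref{thm1}. The distance $D[\rho]$ is controlled by $2\sqrt{1-\max u_\rho}$ (Proposition \ref{pro fpro2}), so no pure-state reduction is needed. The paper's second proof \emph{is} restricted to pure states, but it proceeds via the superposition formula $\Phi(u)=\Phi'(0)u+\int_0^\infty(u-\tau)_+\Phi''(\tau)\,d\tau$ together with the stability Faber--Krahn inequality of \cite{GGRT} for each level set $\{|\mathcal Vf|^2>\tau\}$ --- not via a spectral-gap or $L^p$-differentiation argument.

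The first gap in your plan is the Jensen step. The inequality $-\int u_\rho\ln u_\rho\geq\sum_j\lambda_j(-\int|\mathcal Vf_j|^2\ln|\mathcal Vf_j|^2)$ is correct, but the gap it leaves is governed by the pointwise spread of the functions $v_j:=|\mathcal Vf_j|^2$, not by the spectrum $\{\lambda_j\}$ alone. Writing the gap as $\int\bigl[\sum_j\lambda_j v_j\ln v_j - \bar v\ln\bar v\bigr]$ with $\bar v=u_\rho$, one finds it proportional to something like $\int\sum_j\lambda_j(v_j-\bar v)^2/\bar v$. The orthonormality of the $f_j$ does force the $v_j$ to be pairwise distinct, but I do not see how to extract a \emph{uniform} lower bound on this quantity of the form $\gtrsim(1-\lambda_1)^2$ without essentially reproving a stability estimate about how far apart $|\mathcal Vf_j|^2$ and $|\mathcal Vf_k|^2$ must be. Your casual ``strict concavity already gives a gap of order $1-\lambda_1$'' hides exactly this difficulty, and I think a serious attempt would find it is not much easier than the original problem. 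The second gap is the one you yourself flag: differentiating the Carlen--Lieb $L^p$ inequality $\|\mathcal Vf\|_p\leq\|\mathcal V\phi\|_p$ at $p=2$ requires a stability constant that does not degenerate as $p\downarrow 2$, and the route you sketch (local spectral gap plus a compactness argument to globalize) would produce a non-explicit constant, which the paper emphasizes it avoids (Remark (a) after Theorem \ref{wehrlintro}). The paper's parameter $\tau$ plays roughly the role your $p$ would, but $(u-\tau)_+$ behaves far better in the limit and lets the constant stay explicit and uniform in $\tau$.
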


\begin{remarks}
	(a) The constant $c_*$ in this theorem is explicit (in the sense that it is not obtained by a compactness argument).\\
	(b) The power $2$ of $D[\rho]$ in the theorem is optimal, as is easily seen by taking $\rho$ as a small perturbation of $|\phi\rangle \langle \phi|$.\\
	(c) One easily computes $- \iint_{\R^2} |\mathcal V\phi(x,\omega)|^2\ln|\mathcal V\phi(x,\omega)|^2 \,dx\,d\omega= 1$.
\end{remarks}

Theorem \ref{wehrlintro} will be a special case of a more general result. To motivate it, let us recall that in his proof of Wehrl's conjecture, Lieb proceeded by considering
\begin{equation}
	\label{eq:wehrlgen}
	\iint_{\R^2} \Phi(u_\rho(x,\omega)) \,dx\,d\omega
\end{equation}
with $\Phi(u)=u^r$ for $r\geq 1$ and showing that, among $\rho\geq 0$ with $\Tr\rho=1$, this quantity is maximal for $\rho=|f\rangle\langle f|$ with $f=\phi_{(x_0,\omega_0)}$. Many years later, Lieb and Solovej \cite{LiSo1} showed that the same is valid for any convex and continuous function $\Phi:[0,1]\to\R$. They deduced this generalized Wehrl conjecture via a limiting argument from their resolution of the analogue of Wehrl's conjecture for coherent spin states; see also \cite{LiSo2} for an alternative proof of the latter. Another proof of the generalized Wehrl conjecture was given in \cite{GiHoMa}, based on certain deep facts on Gaussian channels in quantum information theory. The fact that \eqref{eq:wehrlgen} is maximized \emph{only} for the above family of extremals (provided $\Phi$ is not affine linear in the case of pure states, or strictly convex in the general case) was shown recently in \cite{Fr,KuNiOCTi}.

Our main result is a quantitative version of this generalized Wehrl conjecture.

\begin{theorem}\label{thm1}
	Let $\Phi:[0,1]\to\R$ be continuous and convex with $\Phi(0)=0$ and assume that $\Phi$ is not a linear function. Then there is a constant $c_\Phi>0$ such that, for all operators $\rho\geq 0$ on $L^2(\R)$ with ${\rm Tr}\, \rho=1$,
	\begin{equation}\label{SF}
		\iint_{\mathbb{R}^2} \Phi(u_\rho(x,\omega))\, dx d\omega\leq \iint_{\mathbb{R}^2} \Phi(|\mathcal{V}\varphi(x,\omega)|^2)\, dx d\omega-c_\Phi D[\rho]^2. 
	\end{equation}
\end{theorem}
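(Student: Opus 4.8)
The plan is to reduce the general convex function $\Phi$ to the special logarithmic case, i.e.\ to Theorem~\ref{wehrlintro}, which we are allowed to assume as "stated earlier" (it is the motivating special case and its proof is carried out before Theorem~\ref{thm1}). The key structural input is that both sides of \eqref{SF} are linear in $\Phi$ and that the inequality is an equality (with $D[\rho]=0$) precisely on the coherent states $\rho=|\phi_{(x_0,\omega_0)}\rangle\langle\phi_{(x_0,\omega_0)}|$, where $u_\rho = |\mathcal V\phi(\cdot-(x_0,\omega_0))|^2$. Writing $g(s) := |\{(x,\omega):|\mathcal V\phi(x,\omega)|^2 > s\}|$ for the distribution function of the Gaussian Husimi density and recalling $|\mathcal V\phi(x,\omega)|^2 = e^{-\pi(x^2+\omega^2)}$, one computes $g(s) = \pi^{-1}\log(1/s)$ for $0<s<1$, so that the level sets are disks and the decreasing rearrangement $|\mathcal V\phi|^{2*}$ has a completely explicit form. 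The crucial point is a \emph{rearrangement / Faber--Krahn-type inequality for the Husimi function}: for every $\rho$ as in the theorem, the decreasing rearrangement $u_\rho^*$ satisfies $\int_0^t u_\rho^*(s)\,ds \le \int_0^t (|\mathcal V\phi|^2)^*(s)\,ds$ for all $t>0$ (this is the Lieb--Solovej statement, equivalent to \eqref{SF} with $c_\Phi=0$), and a quantitative version thereof should follow from Theorem~\ref{wehrlintro}.

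Concretely, first I would establish that control of the single entropy functional controls the whole family \eqref{eq:wehrlgen}. For a convex $\Phi$ with $\Phi(0)=0$ one has the integral representation $\Phi(u) = \int_0^1 (u-\kappa)_+ \,d\mu(\kappa) + a\,u$ for a nonnegative measure $\mu$ on $[0,1]$ and some $a\in\R$, where the linear term $a\,u$ contributes equally (namely $a$, since $\Tr\rho = \iint u_\rho = 1$) to both sides of \eqref{SF} and hence drops out. Thus it suffices to prove the bound for $\Phi_\kappa(u) := (u-\kappa)_+$, uniformly enough in $\kappa\in[0,1)$ that a constant $c_\Phi = \int_0^1 c(\kappa)\,d\mu(\kappa) > 0$ survives (this integral is positive precisely because $\Phi$ is not linear, i.e.\ $\mu\neq 0$ on $[0,1)$). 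Now $\iint \Phi_\kappa(u_\rho) = \int_\kappa^\infty |\{u_\rho > s\}|\,ds$, so everything is governed by the distribution function $\lambda_\rho(s) := |\{(x,\omega): u_\rho(x,\omega) > s\}|$, and the claim becomes: $\int_\kappa^\infty \lambda_\rho(s)\,ds \le \int_\kappa^\infty \lambda_{\phi}(s)\,ds - c(\kappa) D[\rho]^2$, where $\lambda_\phi(s) = g(s) = \pi^{-1}\log_+(1/s)$.

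The heart of the matter — and the step I expect to be the main obstacle — is converting the quantitative entropy bound \eqref{eq:wehrlintro} into a quantitative statement about $\lambda_\rho$. One route: \eqref{eq:wehrlintro} says $\int_0^\infty \big(\lambda_\phi(s) - \lambda_\rho(s)\big)\,\frac{ds}{?}\dots$ — more precisely, integrating by parts, $-\iint u\ln u = \int_0^\infty \lambda(s)\big(\ln(1/s) - \text{(correction)}\big)\,\ldots$; the clean identity is $-\iint u\ln u\,dx\,d\omega = \int_0^\infty \big(\lambda(e^{-\pi t}) \cdot(\cdots)\big)$, which after the substitution $s=e^{-\pi\tau}$ exhibits the entropy deficit as a weighted $L^1$-distance between $\tau\mapsto \lambda_\rho(e^{-\pi\tau})$ and the linear profile $\tau\mapsto\tau$ (for $\lambda_\phi$). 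Since by Lieb--Solovej $\int_0^t\lambda_\rho^{\downarrow} \le \int_0^t \tau$-profile pointwise, the deficit dominates an $L^1$ "area between the curves," and this in turn should control $D[\rho]$. The genuinely hard part is the last link — showing that smallness of this rearrangement-type deficit forces $\rho$ to be close in trace norm to a coherent state — since $D[\rho]$ is a distance between \emph{operators}, not between Husimi functions, and passing back from $u_\rho$ to $\rho$ requires inverting the Husimi map in a stable way (e.g.\ via the overcompleteness relation $\rho = \iint u_\rho \cdot(\text{something}) $ is false; rather one uses that $\mathcal V\mathcal V^*$ is a projection and exploits the Bargmann-space structure, reducing to the fact that a Fock-space function whose modulus is close to a Gaussian must be close to a coherent-state vector). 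I would therefore expect the bulk of the work, and the source of the explicit constant $c_\Phi$, to lie in this stability analysis on Bargmann/Fock space, with the convex-combination argument above being the routine glue that deduces the general $\Phi$ from the case already handled.
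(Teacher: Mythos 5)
Your decomposition of $\Phi$ into truncations $(u-\kappa)_+$ plus a linear part is the right structural move, and it matches the paper's second proof (Section~3), where the superposition formula \eqref{eq:superposition} reduces Theorem~\ref{thm1} to proving stability for the single function $\Phi_\tau(u)=(u-\tau)_+$ and then integrating in $\tau$. However, the plan as written has two genuine gaps.

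First, the logic of using Theorem~\ref{wehrlintro} as input is reversed. In the paper, Theorem~\ref{wehrlintro} is the special case $\Phi(u)=u\ln u$ of Theorem~\ref{thm1}, not an independent ingredient, so taking it as a black box to prove Theorem~\ref{thm1} is circular. Even granting an independent proof of the entropy bound, the implication you want does not follow: writing $H(\kappa)=\int_\kappa^1(\mu_0-\mu)\,dt$ for the $(u-\kappa)_+$ deficit, Theorem~\ref{wehrlintro} only tells you that the single weighted integral $\int_0^1 H(\kappa)\,\kappa^{-1}\,d\kappa \gtrsim D[\rho]^2$. This gives no pointwise lower bound on $H(\kappa_0)$ for a fixed $\kappa_0$ --- the weighted mass could concentrate far from $\kappa_0$. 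Extracting pointwise information from one weighted integral requires precisely the structural facts the paper develops about $H$ (concavity on $[0,T]$, convexity on $[T,1]$, $H(0)=H(1)=0$, and the key lower bound $1-T\leq C\,H(t^*)$ of Lemma~\ref{lemmaggrt}), and once you have these you are carrying out the paper's direct proof anyway, not using Theorem~\ref{wehrlintro} as a shortcut.

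Second, you misidentify where the analytic difficulty lies. Passing from the Husimi function back to the operator is the easy step, not the hard one: Proposition~\ref{pro fpro2} gives $D[\rho]\leq 2\sqrt{1-T}$ with $T=\max u_\rho$, by a short concavity argument on the spectral decomposition of $\rho$; no delicate ``stable inversion of the Husimi map'' or Bargmann-space reconstruction is needed. The substantial work is the opposite direction: showing that a small $\Phi$-deficit forces $1-T$ to be small. In the paper this is the content of Lemma~\ref{lemma Paolo} (a nontrivial extension of the GGRT lemma from pure to mixed states, via explicit bounds on a harmonic remainder), or, in the pure-state-only second proof, of the quantitative Faber--Krahn inequality of \cite{GGRT} fed into Theorem~\ref{stabtau}. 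Your proposal defers exactly this step to ``should follow from Theorem~\ref{wehrlintro},'' so as written it does not constitute a proof.
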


\begin{remarks}
	(a) As in Theorem \ref{wehrlintro}, the constant $c_\Phi$ is explicit (in the sense that it is not obtained by a compactness argument).\\
	(b) The power $2$ of $D[\rho]$ in the theorem is optimal, as is easily seen by taking $\rho$ as a small perturbation of $|\phi\rangle \langle \phi|$.\\ 
	(c) One easily computes $\iint_{\R^2} \Phi(|\mathcal V\phi(x,\omega)|^2)\,dx\,d\omega= \int_0^1 \Phi(\rho)\,\frac{d\rho}{\rho}$.\\
	(d) For $\Phi(\rho)= \rho\ln\rho$ one obtains Theorem \ref{wehrlintro}.\\
	(e) The quantity $\iint_{\mathbb{R}^2} \Phi(u_\rho)\, dx d\omega$ is therefore maximized \textit{only} by the coherent states provided $\Phi$ is not linear (not necessarily strictly convex).  
\end{remarks}

The developments that made the characterization of cases of equality in the generalized Wehrl conjecture possible \cite{Fr,KuNiOCTi} and that are also at the basis of our analysis here have their roots in the paper \cite{NiTi}, where an optimal Faber--Krahn inequality for the short-time Fourier transform/the coherent state transform was proved. As an aside we mention the paper \cite{Ku} by Kulikov that modified the method of \cite{NiTi} to prove contractivity properties of embeddings in Hardy and Bergman spaces and to verify an analogue of Wehrl's conjecture for the affine-linear group \cite{LiSo3}.

Very recently, in \cite{GGRT} a quantitative version of this optimal Faber--Krahn inequality was obtained. The present paper is a companion to this result.

In fact, we will show two approaches to Theorem \ref{thm1}. 
We first give a `direct proof', that makes the analysis in \cite{NiTi} quantitative. At a crucial point in the analysis a nontrivial generalization of a lemma from the proof of the quantitative Faber--Krahn inequality will enter; see Lemma \ref{lemma Paolo} below. Then we give a second proof of Theorem \ref{thm1}, in the special case of pure states ($\rho=|f\rangle\langle f|$), that takes the quantitative version of the Faber--Krahn inequality as a black box and uses it to deduce the quantitative version of the generalized Wehrl entropy bound. We believe that both proofs have their value and we present them side by side. 

For simplicity we confine ourselves to the one-dimensional case, but the arguments below can be easily
generalized to higher dimensions (see Remark \ref{multidim}).

As an application of Theorem \ref{wehrlintro}, in Section \ref{sec log-sobolev} we will prove a sharp quantitative form of the log-Sobolev inequality for functions in the Fock space.


\section{A first proof of the main result}
Let $\rho$ be a ``density matrix" i.e. an operator $\rho\geq0$ on $L^2(\R)$ such that ${\rm Tr}\, \rho=1$. By the spectral theorem we can write 
\begin{equation}\label{eq fdec}
\rho=\sum_j p_j |f_j\rangle \langle f_j |\qquad{\rm with}\ p_j\geq 0,\ \sum_j p_j=1,\ \langle f_j,f_k\rangle =\delta_{j,k}. 
\end{equation}
We have a corresponding decomposition for its covariant symbol
\begin{equation}\label{eq fcov}
u_\rho(x,\omega):=\langle \varphi_{(x,\omega)}, \rho \varphi_{(x,\omega)}\rangle= \sum_j p_j |\langle f_j,\varphi_{(x,\omega)}\rangle|^2=\sum_j p_j|\mathcal{V} f_j(x,\omega)|^2. 
\end{equation}
By using the relationship between the short-time Fourier transform and the Bargmann transform $\mathcal{B}$ \cite[Proposition 3.4.1]{grochenig}, we see that, identifying $\mathbb{R}^2$ with $\mathbb{C}$ and writing $z=x+i\omega$, we have
\begin{equation}\label{eq ffock}
u_\rho(z)=\sum_j p_j |F_j(z)|^2 e^{-\pi |z|^2},
\end{equation}
where $F_j(z)=\overline{\mathcal{B}f_j(\overline{z})}$ are entire functions, normalized in the Fock space $\mathcal{F}^2$, namely 
\[
\|F_j\|^2_{\mathcal{F}^2}:=\int_{\mathbb{C}} |F_j(z)|^2 e^{-\pi |z|^2} dA(z)=1,
\]
 with $dA(z)=dx\, dy$. And since the monomials $\bigl\{\pi^{n/2}z^n/\sqrt{n!}\bigr\}$ ($n\geq 0$)
form an orthonormal basis of the Fock space, each $F_j$ can be expanded as
    \begin{equation}
    \label{normaliz}
    F_j(z)=\sum_{n\geq 0} \aj_n \frac {\pi^{n/2} z^n}{\sqrt{n!}},\quad
    \text{where}\quad
    \sum_{n\geq 0} |\aj_n|^2=1.
    \end{equation}

 We can now state some basic property of the function $u_\rho(z)$ in \eqref{eq fcov}.
\begin{lemma}\label{lem pre}
Let $\rho$ be a density matrix as in \eqref{eq fdec}. Its covariant symbol $u_\rho(z)$ in \eqref{eq fcov} is a nonnegative real-analytic function in $\mathbb{R}^2$ vanishing at infinity. 
\end{lemma}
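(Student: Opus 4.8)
The plan is to read off all three assertions from the Fock-space representation \eqref{eq ffock}, $u_\rho(z)=\sum_j p_j|F_j(z)|^2e^{-\pi|z|^2}$, using only that $p_j\geq0$, $\sum_j p_j=1$ and $\|F_j\|_{\mathcal{F}^2}=1$. Nonnegativity is immediate, since every summand is nonnegative (equivalently, $u_\rho(z)=\langle\varphi_{(x,\omega)},\rho\varphi_{(x,\omega)}\rangle\geq0$ because $\rho\geq0$).

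For real-analyticity, I would exhibit $u_\rho$ as the restriction to the diagonal of a function on $\C^2$ that is holomorphic in one variable and anti-holomorphic in the other. Since the reproducing kernel of $\mathcal{F}^2$ is $e^{\pi z\overline{w}}$, one has the pointwise bound $|F_j(z)|\leq\|F_j\|_{\mathcal{F}^2}\,e^{\pi|z|^2/2}=e^{\pi|z|^2/2}$ for every $j$. Hence the series
\[
K_\rho(z,w):=\sum_j p_j\,F_j(z)\,\overline{F_j(w)}
\]
is dominated on compact subsets of $\C^2$ by $e^{\pi(|z|^2+|w|^2)/2}$, so it converges absolutely and locally uniformly there. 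Each term is holomorphic in $z$ and in $\overline{w}$, hence so is the locally uniform limit $K_\rho$, which is therefore real-analytic on $\C^2$. Restricting to the diagonal $w=z$ and multiplying by the real-analytic factor $e^{-\pi|z|^2}$, I conclude that $u_\rho(z)=K_\rho(z,z)\,e^{-\pi|z|^2}$ is real-analytic on $\R^2$.

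For vanishing at infinity, I would approximate $u_\rho$ uniformly by its finite partial sums. From \eqref{eq fcov} and $|\mathcal{V}f_j(z)|^2=|\langle f_j,\varphi_{(x,\omega)}\rangle|^2\leq\|f_j\|^2=1$ one gets, for every $N$,
\[
0\leq u_\rho(z)-\sum_{j\leq N}p_j|\mathcal{V}f_j(z)|^2\leq\sum_{j>N}p_j ,
\]
which is small uniformly in $z$ once $N$ is large. It then suffices that each of the finitely many functions $z\mapsto|\mathcal{V}f_j(z)|^2$ vanish at infinity, and this holds because $\varphi_{(x,\omega)}\rightharpoonup0$ weakly in $L^2(\R)$ as $|(x,\omega)|\to\infty$, so that $\mathcal{V}f_j(z)=\langle f_j,\varphi_{(x,\omega)}\rangle\to0$; alternatively, density of polynomials in $\mathcal{F}^2$ together with the reproducing kernel bound shows $|F_j(z)|^2e^{-\pi|z|^2}\to0$. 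Being a uniform limit of functions vanishing at infinity, $u_\rho$ vanishes at infinity.

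I do not expect a genuine obstacle here: the only points requiring a little care are the locally uniform convergence of the two-variable series $K_\rho$ (supplied by the reproducing kernel bound) and the uniformity in $z$ of the tail estimate used in the last step.
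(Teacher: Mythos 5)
Your proof is correct and proceeds from the same starting point as the paper's — the Fock space representation \eqref{eq ffock} together with the pointwise bound on normalized Fock space functions and the trivial uniform bound $|\mathcal V f_j|^2\le1$ — but the mechanism you use for real-analyticity is genuinely different. The paper bounds all mixed derivatives $\partial^h\bar\partial^k|F_j|^2$ via Cauchy estimates, with constants independent of $j$ and growing like $C_R^{h+k+1}(h+k)!$, and concludes from these factorial bounds that the series converges in $C^\infty$ to a real-analytic function. You instead polarize: you form the two-variable kernel $K_\rho(z,w)=\sum_j p_j F_j(z)\overline{F_j(w)}$, dominate it locally uniformly by $e^{\pi(|z|^2+|w|^2)/2}$ using the reproducing kernel bound (your exponent $\pi|z|^2/2$ is in fact the sharp one; the paper's $\pi|z|^2$ is a harmless overestimate), and invoke that a locally uniform limit of functions holomorphic in $z$ and antiholomorphic in $w$ is again of that form, hence real-analytic on $\C^2$; restriction to the diagonal and multiplication by $e^{-\pi|z|^2}$ then finish. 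Both routes are elementary; yours avoids tracking derivative estimates at the (modest) cost of invoking the separate-holomorphy fact, and has the advantage of producing the sesquiholomorphic extension $K_\rho$ explicitly, which can be useful elsewhere. The vanishing-at-infinity argument is essentially identical to the paper's.
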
 
\begin{proof}
The series $\sum_j p_j |\langle f_j,\varphi_{(x,\omega)}\rangle|^2$ converges uniformly in $\mathbb{R}^2$, because  $|\langle f_j,\varphi_{(x,\omega)}\rangle|\leq 1$ by the Cauchy--Schwarz inequality. Hence $u$ vanishes at infinity, since each function $\langle f_j,\varphi_{(x,\omega)}\rangle$ vanishes at infinity. 

To prove that $u$ is real-analytic, we use the representation \eqref{eq ffock}. Since $|F_j|^2$ is subharmonic and $\|F_j\|^2_{\mathcal{F}^2}=1$ we have the pointwise estimate (cf. \cite[Theorem 2.7]{Zhu})
\[
|F_j(z)|\leq e^{\pi |z|^2}.
\]
Hence by the Cauchy estimates we see that for every $R$ there exists $C_R>0$ such that, for every $h,k\geq0$, 
\[
|\partial^h \bar{\partial}^k |F_j(z)|^2|=|\partial^{h+k} F(z) |\leq C_R^{h+k+1} (h+k)!,\qquad |z|\leq R. 
\]
Since $C_R$ is independent of $j$, this implies that the series in \eqref{eq ffock} converges in $C^\infty(\mathbb{R}^2)$ and $u$ is real-analytic. 
\end{proof}
Given a density matrix $\rho$, and its covariant symbol $u_\rho$, throughout this section we set 
\begin{equation}\label{eq fT}
T:=\max_{(x,\omega)\in\mathbb{R}^2} u_\rho (x,\omega).
\end{equation}
Clearly $0< T\leq 1$. For the functional $D[\rho]$ in \eqref{eq drho} we have the following result.
\begin{proposition}\label{pro fpro2}
We have
\[
D[\rho]\leq 2\sqrt{1-T}.
\]
\end{proposition}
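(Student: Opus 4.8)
The plan is to reduce the bound to a comparison between $\rho$ and the single coherent state placed at a maximum point of $u_\rho$, and then to invoke an elementary estimate for the trace distance of a density matrix from a pure state. Since $u_\rho$ is continuous and vanishes at infinity by Lemma \ref{lem pre}, the maximum $T$ in \eqref{eq fT} is attained at some $(x_*,\omega_*)\in\mathbb{R}^2$. Set $\psi:=\varphi_{(x_*,\omega_*)}$, a unit vector in $L^2(\mathbb{R})$, and $P:=|\psi\rangle\langle\psi|$, so that by \eqref{eq fcov} we have $\langle\psi,\rho\psi\rangle=u_\rho(x_*,\omega_*)=T$. As $D[\rho]\le\|\rho-P\|_{S_1}$, it suffices to prove $\|\rho-P\|_{S_1}\le 2\sqrt{1-T}$.

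Next I would use the spectral decomposition \eqref{eq fdec}, $\rho=\sum_j p_j|f_j\rangle\langle f_j|$ with $p_j\ge0$ and $\sum_j p_j=1$, to write $\rho-P=\sum_j p_j\bigl(|f_j\rangle\langle f_j|-P\bigr)$, a series converging in $S_1$ (each term has trace norm at most $2$). The triangle inequality then gives
\[
\|\rho-P\|_{S_1}\le\sum_j p_j\bigl\||f_j\rangle\langle f_j|-P\bigr\|_{S_1}.
\]
At this point one invokes the elementary fact that, for unit vectors $f,g$, the operator $|f\rangle\langle f|-|g\rangle\langle g|$ is self-adjoint, traceless, of rank at most two, with eigenvalues $\pm\sqrt{1-|\langle f,g\rangle|^2}$; hence $\bigl\||f\rangle\langle f|-|g\rangle\langle g|\bigr\|_{S_1}=2\sqrt{1-|\langle f,g\rangle|^2}$ (this can be checked by computing the $2\times2$ matrix of the operator on $\mathrm{span}\{f,g\}$, or from $\operatorname{Tr}\bigl((|f\rangle\langle f|-|g\rangle\langle g|)^2\bigr)=2-2|\langle f,g\rangle|^2$ together with tracelessness). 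Applying this with $g=\psi$ yields $\|\rho-P\|_{S_1}\le 2\sum_j p_j\sqrt{1-|\langle f_j,\psi\rangle|^2}$.

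Finally I would apply Jensen's inequality (concavity of $t\mapsto\sqrt{t}$) with the weights $p_j$, together with $\sum_j p_j|\langle f_j,\psi\rangle|^2=\langle\psi,\rho\psi\rangle=T$ from \eqref{eq fcov}, to obtain
\[
\sum_j p_j\sqrt{1-|\langle f_j,\psi\rangle|^2}\le\sqrt{\,\sum_j p_j\bigl(1-|\langle f_j,\psi\rangle|^2\bigr)}=\sqrt{1-T},
\]
and hence $D[\rho]\le\|\rho-P\|_{S_1}\le2\sqrt{1-T}$. I expect the only point requiring any real care to be the normalization in the trace-norm identity for the difference of two rank-one projections; once that is in hand, the triangle inequality followed by the concavity of the square root automatically produces both the optimal power of $1-T$ and the sharp constant $2$. (Alternatively, one may remark that the asserted inequality is exactly the special case, for a pure target state, of the standard trace-distance/fidelity bound $\|\rho-\sigma\|_{S_1}\le2\sqrt{1-F(\rho,\sigma)^2}$, since $F(\rho,|\psi\rangle\langle\psi|)=\sqrt{\langle\psi,\rho\psi\rangle}$.)
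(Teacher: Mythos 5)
Your argument is correct and essentially identical to the paper's: both proofs hinge on the rank-two identity $\||f\rangle\langle f|-|g\rangle\langle g|\|_{S_1}=2\sqrt{1-|\langle f,g\rangle|^2}$, followed by the triangle inequality and Jensen's inequality for the concave map $x\mapsto\sqrt{1-x}$. The only cosmetic difference is that you fix the maximizing point $(x_*,\omega_*)$ at the outset, whereas the paper proves the bound $\|\rho-|\varphi_z\rangle\langle\varphi_z|\|_{S_1}\le 2\sqrt{1-u_\rho(z)}$ for every $z$ and then takes the infimum; your derivation of the rank-one formula via $\operatorname{Tr}(A^2)$ and tracelessness is an equally valid alternative to the paper's explicit $2\times 2$ matrix computation.
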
  
\begin{proof}
Let us first prove that, if $f,g\in L^2(\mathbb{R})$, with $\|f\|_2=\|g\|_2=1$ we have 
\begin{equation}\label{eq fdist}
\| |f\rangle\langle f|- |g\rangle\langle g| \|_{S_1}=2\sqrt{1-|\langle f,g\rangle|^2}.
\end{equation}
We can restrict our analysis to a plane containing $f$ and $g$. Let $\tilde{f}$ be a function in such a plane, orthogonal to $f$ and with $\|\tilde{f}\|_2=1$. The self-adjoint operator $|f\rangle\langle f|- |g\rangle\langle g|$ is represented, with respect to the orthonormal basis $f,\tilde{f}$, by the matrix 
\[
\begin{pmatrix}
1-|\langle g,f\rangle|^2 & -\langle g,f\rangle \langle \tilde{f},g\rangle\\
-\langle g,\tilde{f}\rangle\langle f,g\rangle&-|\langle g,\tilde{f}\rangle|^2
\end{pmatrix},
\]
that has eigenvalues $\pm|\langle g,\tilde{f}\rangle |$. Hence
\[
\| |f\rangle\langle f|- |g\rangle\langle g| \|_{S_1}=2|\langle g,\tilde{f}\rangle |=2\sqrt{1-|\langle f,g\rangle|^2},
\]
which proves \eqref{eq fdist}. 

Now, for a density matrix $\rho$ as in \eqref{eq fdec} we have, for every $z\in\mathbb{R}^2$,
\begin{align*}
||\rho- |\varphi_z\rangle\langle \varphi_z|\|_{S_1}&=\|\sum_j p_j (|f_j\rangle \langle f_j|-|\varphi_z\rangle\langle \varphi_z|)\|_{S_1}\\
&\leq \sum_j p_j \||f_j\rangle \langle f_j|-|\varphi_z\rangle\langle \varphi_z|\|_{S_1}\\
&\leq 2\sum_j p_j \sqrt{1-|\langle f_j,\varphi_z\rangle|^2}\\
&\leq 2\sqrt{1-\sum_j p_j |\langle f_j,\varphi_z\rangle|^2}\\
&=2\sqrt{1-u_\rho(z)},
\end{align*}
where we used \eqref{eq fdist} and the fact that the function $x\mapsto\sqrt{1-x}$ is concave.  

Taking the infimum over $z\in\mathbb{R}^2$ gives the desired result. 
\end{proof}

\begin{remark}\label{rem pure} In the case of pure states, i.e. $\rho=|f\rangle \langle f|$ where $f\in L^2(\mathbb{R})$, with $\|f\|_2=1$, we have $T=\max_{z\in\mathbb{R}^2} |\mathcal{V}f(z)|^2$, and the inequalities in the proof of Proposition \ref{pro fpro2} are in fact equalities. Hence 
\[
D[\rho]=2\sqrt{1-T}.
\]
Denoting, with abuse of notation,  
\[
D[f]:= \inf_{(x_0,\omega_0)\in\mathbb{R}^2,\, |c|=1} \|f-c\varphi_{(x_0,\omega_0)}\|_2,
\]
it is easy to check that (cf. \cite[Lemma 2.5]{GGRT})
\[
D[f]= \sqrt{2(1-\sqrt{T})}.
\]
Hence 
\[
\frac{1}{2}D[\rho]\leq D[f]\leq \frac{1}{\sqrt{2}}D[\rho].
\]
In this connection we also note that
$$
D[\rho] = 2 \inf_{(x_0,\omega_0)\in\mathbb{R}^2,\, c\in\C} \|f-c\varphi_{(x_0,\omega_0)}\|_2 \,,
$$
where we note that the optimization over all $c\in\C$ without a constraint on their absolute value. This formula follows easily by carrying out the optimization over $c$ explicitly and comparing with the expression \eqref{eq fdist} for $D[\rho]$.
\end{remark}

We now show that the covariant symbol $u_\rho$ of a density matrix $\rho$ enjoys a log-subharmonic type property that will be crucial in the following. 
\begin{proposition}\label{pro logsub}
Let $u_\rho$ be the covariant symbol of a density matrix $\rho$, as in \eqref{eq fcov}. We have 
\[
\Delta \ln u_\rho(z)\geq -4\pi
\] 
for all $z\in\mathbb{R}^2$ such that $u_\rho(z)>0$.
\end{proposition}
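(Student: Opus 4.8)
The plan is to exploit the representation \eqref{eq ffock}, namely $u_\rho(z)=\left(\sum_j p_j |F_j(z)|^2\right) e^{-\pi|z|^2}$ with each $F_j$ entire. Writing $G(z):=\sum_j p_j |F_j(z)|^2$, we have $\ln u_\rho = \ln G - \pi|z|^2$, and since $\Delta(\pi|z|^2)=4\pi$, the claim $\Delta\ln u_\rho\geq-4\pi$ is equivalent to $\Delta\ln G\geq 0$ wherever $G>0$, i.e. to the statement that $G$ is log-subharmonic. So the whole proposition reduces to showing that a convex combination (indeed any sum with nonnegative coefficients) of squared moduli of entire functions is log-subharmonic.

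The key step is therefore the log-subharmonicity of $G=\sum_j p_j|F_j|^2$. The cleanest route I would take is the ``sum of log-subharmonic functions is log-subharmonic'' principle: each $|F_j|^2$ is log-subharmonic (indeed $\ln|F_j|^2$ is harmonic away from zeros of $F_j$ and $-\infty$ at the zeros, hence subharmonic everywhere), and a classical fact is that a sum of log-subharmonic functions is again log-subharmonic. One should be a little careful about convergence of the series and about points where $G>0$ but some individual $F_j$ vanishes; I would handle this by first treating finite sums and then passing to the limit using the locally uniform $C^\infty$ convergence established in Lemma \ref{lem pre}. For the finite-sum case, the slick computation is: at a point $z_0$ with $G(z_0)>0$, one checks the sub-mean-value inequality for $\ln G$ directly, or equivalently computes $\Delta\ln G = \frac{\Delta G}{G} - \frac{|\nabla G|^2}{G^2}\geq 0$. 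Using $\Delta|F_j|^2 = 4|F_j'|^2$ (as $F_j$ is holomorphic, $\Delta=4\partial\bar\partial$ and $\partial\bar\partial|F_j|^2=|F_j'|^2$) and $\nabla|F_j|^2$ expressed via $\partial(|F_j|^2)=\overline{F_j}F_j'$, the inequality $G\,\Delta G\geq|\nabla G|^2$ becomes, after writing things in terms of the vectors $(\sqrt{p_j}F_j)_j$ and $(\sqrt{p_j}F_j')_j$,
$$
\left(\sum_j p_j|F_j|^2\right)\left(\sum_j p_j|F_j'|^2\right)\geq \left|\sum_j p_j \overline{F_j}\,F_j'\right|^2,
$$
which is exactly the Cauchy--Schwarz inequality in $\ell^2$. (One uses $4\partial\bar\partial\ln G = \Delta \ln G$ and $\partial\bar\partial \ln G = \frac{\partial\bar\partial G}{G}-\frac{\partial G\,\bar\partial G}{G^2}$, with $\partial\bar\partial G=\sum_j p_j|F_j'|^2$ and $\partial G=\sum_j p_j\overline{F_j}F_j'$, so $\Delta\ln G\geq0$ follows.)

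The main obstacle, such as it is, is purely technical: justifying the interchange of $\Delta$ (or $\partial\bar\partial$) with the infinite sum and handling zeros of individual $F_j$. Both are dispatched by the regularity already in hand — the series converges in $C^\infty_{\mathrm{loc}}$ by Lemma \ref{lem pre}, so $\partial G=\sum_j p_j\overline{F_j}F_j'$ and $\partial\bar\partial G=\sum_j p_j|F_j'|^2$ termwise, and at any $z_0$ with $G(z_0)>0$ the function $\ln G$ is smooth near $z_0$ so the pointwise identity $\Delta\ln G=4(\partial\bar\partial G/G - |\partial G|^2/G^2)$ is valid there; the Cauchy--Schwarz bound above then gives $\Delta\ln G(z_0)\geq0$, and subtracting $\pi|z|^2$ yields $\Delta\ln u_\rho(z_0)\geq -4\pi$.
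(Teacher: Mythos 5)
Your proposal is correct and follows essentially the same route as the paper: reduce to log-subharmonicity of $\sum_j p_j|F_j|^2$ via \eqref{eq ffock}, compute $\partial\bar\partial\ln$ of the sum, and observe the resulting numerator is nonnegative by Cauchy--Schwarz. The only cosmetic difference is that the paper absorbs the weights via $G_j=\sqrt{p_j}F_j$ before computing, while you keep the $p_j$ explicit and additionally spell out the convergence/regularity justification (via Lemma \ref{lem pre}) that the paper leaves implicit.
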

\begin{proof}
We use again the representation \eqref{eq ffock}. Setting $G_j(z)=\sqrt{p_j} F_j(z)$ we have to prove that \[
\partial \bar{\partial} \ln \sum_j |G_j|^2\geq 0
\]
where the above sum is strictly positive. Indeed, we have  
\begin{align*}
\partial \bar{\partial} \ln \sum_j |G_j|^2&=\partial \frac{\sum_j G_j \bar{\partial} \bar{G}_j}{\sum_j|G_j|^2}\\
&=\frac{\sum_j |G_j|^2 \sum_k|\partial G_k|^2-| \sum_j \bar{G}_j \partial G_j|^2}{(\sum_j |G_j|^2)^2},
\end{align*}
which is nonnegative by the Cauchy--Schwarz inequality. 
\end{proof}

The following proposition provides information about the distribution function of the covariant symbol of a density matrix. The properties below are a consequence of Lemma \ref{lem pre}, Proposition \ref{pro fpro2} and Proposition \ref{pro logsub}, combined with a general machinery developed in the series of paper \cite{NiTi, Fr,KuNiOCTi}. 
\begin{theorem}\label{lem1}
	Let $\rho$ be a density matrix and $u_\rho$ its covariant symbol, as in \eqref{eq fcov}. Let
	\[
	\mu(t):=|\{u_\rho>t\}|
	\qquad\text{for all}\ t>0.
	\]
	\begin{itemize}
		\item[a)] The function $\mu(t)$ is nonincreasing and absolutely continuous on the compact subsets of $(0,+\infty)$.\medskip
		\item[b)] Setting $T=\max_{z\in\mathbb{R}^2} u_\rho(z)$ as in \eqref{eq fT}, we have $T\in (0,1]$, and $T=1$ if and only if $\rho=|\varphi_{z_0}\rangle \langle \varphi_{z_0} |$ for some $z_0\in\mathbb{R}^2$. \medskip
		\item[c)] $\mu'(t)\leq -1/t$ for almost every $t\in(0,T)$, and $\mu(t)=0$ for $t\geq T$.\medskip
		\item[d)] If $\rho=|\varphi_{z_0}\rangle \langle \varphi_{z_0} |$ for some $z_0\in\mathbb{R}^2$ (i.e. $T=1$), $|\{u_\rho>t\}|=(-\ln t)_+=:\mu_0(t)$. \par\noindent
		If $\rho\not=|\varphi_{z_0}\rangle \langle \varphi_{z_0} |$ for every $z_0\in\mathbb{R}^2$ (i.e. $T<1$) there is a unique ${t^\ast}\in (0,T)$ such that $\mu({t^\ast})=\mu_0({t^\ast})$, and $\mu(t)> \mu_0(t)$ for $0<t<t^\ast$ and  $\mu(t)< \mu_0(t)$  for $t^\ast<t<1$.
	\end{itemize} 
\end{theorem}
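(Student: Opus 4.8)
The plan is to verify that $u_\rho$ meets the structural hypotheses of the rearrangement machinery of \cite{NiTi} (and its refinements in \cite{Fr,KuNiOCTi}) and then to run that argument, keeping track of the one place where the normalization $\iint u_\rho\,dA=1$ enters. Throughout I would write $u=u_\rho$, $\Omega_t=\{u>t\}$, and $P(\Omega_t)=\int_{\partial\Omega_t}ds$ for the perimeter. For part a) I would use Lemma \ref{lem pre}: $u$ is real-analytic, nonnegative and vanishes at infinity, so $\{u\ge a\}$ is compact for $a>0$; since $u$ is nonconstant, every level set $\{u=t\}$ has measure zero (so $\mu$ has no jumps), and for $t$ outside the measure-zero set of critical values $\{u=t\}$ is a finite union of real-analytic closed curves with $|\nabla u|>0$. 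The coarea formula on $\{a<u<b\}$ then yields $\mu(a)-\mu(b)=\int_a^b\big(\int_{\{u=t\}}|\nabla u|^{-1}\,ds\big)\,dt$, which is precisely absolute continuity of $\mu$ on compact subsets of $(0,\infty)$, together with $-\mu'(t)=\int_{\{u=t\}}|\nabla u|^{-1}\,ds$ for a.e.\ $t$; monotonicity is obvious. This is the step where I would lean on the cited machinery, since the level-set regularity, while routine for real-analytic functions, is what legitimizes the coarea and divergence-theorem manipulations below.

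Part b) is elementary: $u(z)=\sum_j p_j|\langle f_j,\varphi_z\rangle|^2\le\sum_j p_j=1$ and $u>0$ somewhere, so $T\in(0,1]$; if $u(z_0)=1$ then $|\langle f_j,\varphi_{z_0}\rangle|=1$ for every $j$ with $p_j>0$, forcing $f_j=c_j\varphi_{z_0}$, and orthonormality of the $f_j$ leaves a single such index, whence $\rho=|\varphi_{z_0}\rangle\langle\varphi_{z_0}|$; the converse is the computation $u(z)=|\langle\varphi_z,\varphi_{z_0}\rangle|^2=e^{-\pi|z-z_0|^2}$, which has maximum $1$. For part c), the heart of the matter, I would use Proposition \ref{pro logsub}: $w:=\ln u+\pi|z|^2$ is subharmonic on $\{u>0\}$, hence near $\overline{\Omega_t}$ for regular $t\in(0,T)$. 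Since $u$ decreases in the outward normal direction across every component of $\partial\Omega_t$ (including those bounding holes), the divergence theorem gives
\[
0\le\int_{\Omega_t}\Delta w\,dA=\int_{\partial\Omega_t}\partial_n w\,ds=-\frac1t\int_{\partial\Omega_t}|\nabla u|\,ds+4\pi\mu(t),
\]
using $\partial_n\ln u=-|\nabla u|/t$ on $\{u=t\}$ and $\pi\int_{\Omega_t}\Delta|z|^2\,dA=4\pi\mu(t)$; thus $\int_{\partial\Omega_t}|\nabla u|\,ds\le 4\pi t\,\mu(t)$. Combining the Cauchy--Schwarz inequality $P(\Omega_t)^2\le\big(\int_{\partial\Omega_t}|\nabla u|^{-1}\,ds\big)\big(\int_{\partial\Omega_t}|\nabla u|\,ds\big)=(-\mu'(t))\int_{\partial\Omega_t}|\nabla u|\,ds$ with the isoperimetric inequality $P(\Omega_t)^2\ge 4\pi|\Omega_t|=4\pi\mu(t)$ yields $4\pi\mu(t)\le(-\mu'(t))\,4\pi t\,\mu(t)$, i.e.\ $\mu'(t)\le-1/t$ for a.e.\ $t\in(0,T)$; and $\mu(t)=0$ for $t\ge T$ is immediate. (In the quantitative companion this same chain of inequalities carries explicit deficit terms; here only the bare inequalities are needed.)

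For part d), in the Gaussian case the computation above gives $|\{u>t\}|=(-\ln t)_+=\mu_0(t)$. If $T<1$, I would set $\delta:=\mu-\mu_0$ on $(0,1)$, a continuous function; by c), $\delta'=\mu'+1/t\le0$ on $(0,T)$, while $\delta(t)=\ln t<0$ on $[T,1)$ and $\delta(T)=\ln T<0$. Since $\int_0^1\mu\,dt=\iint u=1=\int_0^1\mu_0\,dt$, one has $\int_0^1\delta=0$ with $\delta\not\equiv0$, so $\delta$ is positive somewhere, necessarily on $(0,T)$; being nonincreasing and continuous there, positive near $0$ and negative at $T$, it has a unique zero $t^\ast\in(0,T)$ with $\delta>0$ on $(0,t^\ast)$ and $\delta\le0$ on $(t^\ast,1)$. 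Strictness of the latter (hence uniqueness of $t^\ast$) would follow because $\delta\equiv0$ on a subinterval would force equality throughout the chain in c), whence the equality analysis of \cite{Fr,KuNiOCTi} would make $u$ a Gaussian, contradicting $T<1$.

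The hard part will be not the inequalities themselves, which are the short computation above, but the technical underpinnings that I propose to import: the level-set regularity behind a) and the divergence-theorem step in c), together with the equality-case input needed for d), are exactly where the machinery of \cite{NiTi,Fr,KuNiOCTi} does the real work; reproving those from scratch would constitute the bulk of the argument.
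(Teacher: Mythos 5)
Your proof is correct and follows essentially the same route as the paper: parts (a), (c), (d) reproduce the coarea/isoperimetric chain of inequalities and the ``$\int\mu=\int\mu_0=1$ plus monotonicity of $\mu-\mu_0$'' argument, with the equality analysis in (d) invoked as a black box exactly as the paper does (the paper spells it out: equality in the chain forces some super-level set to be a disk on which $\Delta\ln u_\rho=-4\pi$, whence by Dirichlet uniqueness and real-analyticity $u_\rho$ is a Gaussian with total mass $1$, so $T=1$, a contradiction).

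The one genuine deviation is in part (b), where your argument is more elementary and arguably cleaner. The paper deduces $T=1\Rightarrow\rho=|\varphi_{z_0}\rangle\langle\varphi_{z_0}|$ from Proposition~\ref{pro fpro2} (which gives $D[\rho]=0$) together with the closedness of the family $\{|\varphi_z\rangle\langle\varphi_z|\}$ in the trace-class topology. You instead observe directly that $T$ is attained (by Lemma~\ref{lem pre}), and that $\sum_j p_j|\langle f_j,\varphi_{z_0}\rangle|^2=1$ with $\sum p_j=1$ and $|\langle f_j,\varphi_{z_0}\rangle|\le 1$ forces $|\langle f_j,\varphi_{z_0}\rangle|=1$ for each $j$ with $p_j>0$, hence $f_j=c_j\varphi_{z_0}$ by the Cauchy--Schwarz equality case; orthonormality then leaves a single index and $\rho=|\varphi_{z_0}\rangle\langle\varphi_{z_0}|$. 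This avoids both Proposition~\ref{pro fpro2} and the (unproved-in-the-paper) closedness assertion, at the modest cost of repeating the Cauchy--Schwarz reasoning that the paper already packages into Proposition~\ref{pro fpro2}. Both are sound; yours is self-contained at this point, the paper's reuses an earlier proposition.
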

\begin{proof} (a) Lemma \ref{lem pre} guarantees that the set of critical points of $u$ has measure zero, so that the desired result follows from the coarea formula; in particular
\[
\mu'(t)=-\int_{\{u_\rho=t\}}\frac{1}{|\nabla u_\rho|}d\mathcal{H}^1\qquad a.e.\ t\in (0,\max u_\rho),
\]
where $\mathcal{H}^1$ is the one dimensional Hausdorff measure; see e.g. the proof of \cite[Lemma 3.2]{NiTi}). \par
(b) If $T=1$ we have $\rho=|\varphi_{z_0}\rangle \langle \varphi_{z_0} |$ for some $z_0\in\mathbb{R}^2$, as a consequence of Proposition \ref{pro fpro2}, since the set of rank one projectors $\{|\varphi_z\rangle\langle  \varphi_z|:\ z\in\mathbb{R}^2\}$ is easily seen to be closed in the space of trace class operators. Viceversa, if $\rho=|\varphi_{z_0}\rangle \langle \varphi_{z_0} |$ for some $z_0\in\mathbb{R}^2$, we have $u_\rho(x,\omega)=|\mathcal{V}\phi(x,\omega)|^2=e^{-\pi(x^2+\omega^2)}$ and therefore $T=1$. 
\par
(c) The differential inequality is a direct consequence of Lemma \ref{lem pre} and the log-suharmonic property in Proposition \ref{pro logsub} in view of a general argument involving the coarea formula and the isoperimetric inequality (see the proofs of \cite[Proposition 3.3]{NiTi}, \cite[Theorem 2.1]{Ku} and \cite[Lemma 12]{Fr}). We sketch this computation for the benefit of the reader: for a.e. $t\in (0,T)$ we have, 
\begin{align}\label{eq iso}
4\pi\mu(t)&\leq \mathcal{H}^1(\{u_\rho=t\})^2\leq \int_{\{u_\rho=t\}}\frac{1}{|\nabla u_\rho|}d\mathcal{H}^1 \int_{\{u_\rho=t\}}|\nabla u_\rho|d\mathcal{H}^1\\
&=-\mu'(t)t\int_{\{u_\rho=t\}}\frac{|\nabla u_\rho|}{u_\rho}d\mathcal{H}^1\nonumber \\
&\leq -\mu'(t)t\iint_{\{u_\rho>t\}}-\Delta \ln u_\rho(x,y)\, dx\,dy\nonumber\\
&\leq -4\pi t \mu'(t) \mu(t). \nonumber
\end{align}

Hence $\mu'(t)\leq -1/t$. Alternatively, one can rely on the general result \cite[Theorem 1]{KuNiOCTi} (that was proved following the same argument), applied to the function $\ln u$ and with $F(t)=e^t$. \par
(d) The first statement is clear by the computation in the proof of part (b). The second statement is similar but slightly stronger than the statement of \cite[Lemma 3.1]{KuNiOCTi}. The function $\mu_0(t)-\mu(t)$ is continuous and nondecreasing on $(0,T)$,  and 
 \begin{equation}\label{eq int}
	\int_0^{+\infty} \mu_0(t)\, dt=\int_0^{+\infty} \mu(t)\, dt=\iint_{\mathbb{R}^2} u_\rho(x, \omega)\,dx d\omega=1,
\end{equation}
where the last equality follows from \eqref{eq fcov} and the fact that $\mathcal{V}:L^2(\mathbb{R})\to L^2(\mathbb{R}^2)$ is an isometry. Hence there is at least one point ${t^\ast}\in (0,T)$ such that $\mu({t^\ast})=\mu_0({t^\ast})$. It remains to prove its uniqueness. We argue as in \cite[Remark 3.5]{NiTi} and the proof of \cite[Corollary 2.2]{GGRT}. Namely, suppose that $\mu(t)=\mu_0(t)$ in some open interval $I\subset (0,T)$ of positive length. Then $\mu'(t)=-1/t$ fo every $t\in I$ and we have equalities in the chain of inequalities \eqref{eq iso}. Hence, for at least one $t\in (0,T)$, the super-level set $\{u_\rho>t\}$ is a ball, say $B$, and $\Delta \ln u_\rho=-4\pi$ in $B$. Without loss of generality we can suppose that $B$ is centered at the origin. Since $u_\rho$ is constant ($=t$) on the boundary $\partial B$, by the uniqueness of the solution of the Dirichlet problem we see that $\log u_\rho(z)=-\pi |z|^2+c$ in $B$ for some $c\in\mathbb{R}$. Hence $u_\rho(z)=e^c e^{-\pi |z|^2}$ for $z\in B$ and therefore for every $z\in\mathbb{R}^2$ by real-analyticity (cf. Lemma \ref{lem pre}). Since $e^c=T=1$ by \eqref{eq int}, this is a contradiction.  
\end{proof}

The main ingredient in the proof of \eqref{SF} is the function  
\begin{equation}
	\label{defH}
	H(t):=\int_0^t (\mu(\tau)-\mu_0(\tau))\,d\tau \,, \quad t\in [0,1] \,,
\end{equation}
where we used the notation of Theorem \ref{lem1}. Note that $H\in C^0([0,1])\cap C^1((0,1])$, $H(0)=H(1)=0$, $H$ is increasing on $[0,{t^\ast}]$ and decreasing on $[{t^\ast},1]$, so $H\geq 0$ achieves its maximum at ${t^\ast}$. Moreover, since $H''=\mu'-\mu_0'$,
$H$ is concave on $[0,T]$ (by the differential inequality $\mu'\leq \mu_0'$, which is valid almost everywhere on $(0,T)$), and 
convex on $[T,1]$ (since, there, $H''=-\mu_0' > 0$). 

Observe that, if $t\geq T$, then by \eqref{eq int}
\begin{equation}\label{eq agg}
	H(t)=\int_t^1(\mu_0(\tau)-\mu(\tau))\, d\tau=\int_t^1 (-\ln \tau)\, d\tau\geq \frac{(1-t)^2}{2} \,, \qquad t\in[T,1] \,,
\end{equation}
where we used the fact that  $-\ln t\geq 1-t$ for $t>0$. This also implies 
\begin{equation}\label{eq debole}
H(t^\ast)\geq \frac{(1-T)^2}{2}.
\end{equation}
In fact, we will need a stronger lower bound, in terms of $1-T$, rather than $(1-T)^2$. It will be a consequence of the next crucial lemma, where $u_\rho$, $T$ and $\mu_0$
are as in the previous theorem.

\begin{lemma}\label{lemma Paolo} 
For every $t_0\in (0,1)$, there exist a threshold $T_0\in (t_0,1)$
and a constant $C_0>0$, both depending only on $t_0$,
 with the following property:  if $T\geq T_0$, then
\begin{equation}
\label{estmuimproved}
\mu(t)\leq (1+C_0(1-T))\log \frac T t,\quad \forall t\in [t_0,T].
\end{equation}
\end{lemma}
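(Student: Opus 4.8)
The plan is to exploit the power–series structure of $u_\rho$ at its maximum point, together with a cancellation coming from harmonicity that is essential for getting the correct power $1-T$ (rather than $\sqrt{1-T}$).

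\emph{Normalization and expansion.} Since translating $\rho$ in phase space translates $u_\rho$ and affects neither $D[\rho]$ nor the class of density matrices, we may assume that $T=\max u_\rho$ is attained at $z=0$. Writing $u_\rho(z)=U(z)e^{-\pi|z|^2}$ as in \eqref{eq ffock} with $U(z)=\sum_j p_j|F_j(z)|^2$ and expanding each $F_j$ as in \eqref{normaliz}, one obtains the absolutely convergent expansion $U(z)=\sum_{h,k\ge 0}c_{hk}z^h\bar z^k$, with $c_{hk}=\pi^{(h+k)/2}(h!\,k!)^{-1/2}\sum_j p_j \aj_h\overline{\aj_k}$, hence $|c_{hk}|\le \pi^{(h+k)/2}(h!\,k!)^{-1/2}\sqrt{P_hP_k}$ where $P_n:=\sum_j p_j|\aj_n|^2$. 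Two facts are crucial: (i) $P_0=U(0)=T$ and $\sum_{n\ge 0}P_n=\sum_j p_j=1$, so $P_n\le 1-T$ for every $n\ge 1$; and (ii) $c_{10}=\partial U(0)=0$, because $z=0$ is a critical point of $u_\rho$ and $\partial u_\rho(0)=\partial U(0)$.

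\emph{Logarithm, star‑shapedness, and an exact area formula.} Write $\ln(u_\rho(z)/T)=-\pi|z|^2+h(z)+W(z)$, where $h(z):=\tfrac2T\Re\sum_{k\ge 2}c_{k0}z^k$ is harmonic with $h(0)=0$ and $\nabla h(0)=0$ (here $c_{10}=0$ enters), and $W$ collects the remaining (mixed and higher) terms. From the coefficient bounds, $h$ starts at quadratic order with $|h(z)|\lesssim\sqrt{1-T}\,|z|^2$, while $|W(z)|\lesssim (1-T)\,|z|^2$, both on any fixed disk $|z|\le R$ once $1-T$ is small enough depending on $R$ (with matching bounds for radial derivatives, and $U>0$ there). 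Fix $t\in[t_0,T)$ and set $s:=\ln(T/t)\in(0,\ln(1/t_0)]$, so $\{u_\rho>t\}=\{\Phi<s\}$ with $\Phi(z):=\pi|z|^2-h(z)-W(z)$. Since $\partial_\rho\Phi>0$ along every ray for $1-T$ small, $\{\Phi<s\}$ is star‑shaped about $0$, contained in a fixed disk, with boundary a graph $\rho=r(\theta)$ solving $\pi r(\theta)^2=s+h(r(\theta)e^{i\theta})+W(r(\theta)e^{i\theta})$; in particular $r(\theta)^2\le 2s/\pi$. Therefore
$\mu(t)=\tfrac12\int_0^{2\pi}r(\theta)^2\,d\theta=s+\tfrac1{2\pi}\int_0^{2\pi}\big(h(r(\theta)e^{i\theta})+W(r(\theta)e^{i\theta})\big)\,d\theta.$

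\emph{The key estimate and the main obstacle.} It remains to bound the last integral by $C_0(1-T)s$. The $W$‑part is harmless: $|W|\lesssim(1-T)|z|^2$ and $r(\theta)^2\le 2s/\pi$ give a contribution $\lesssim(1-T)s$. The $h$‑part is the real difficulty, since $h$ is itself of size $\sqrt{1-T}$, so a crude estimate of the boundary perturbation would lose a square root. Here is the cancellation: with $\rho_0:=\sqrt{s/\pi}$, harmonicity of $h$ and $h(0)=0$ give $\int_0^{2\pi}h(\rho_0 e^{i\theta})\,d\theta=0$, so only $h(r(\theta)e^{i\theta})-h(\rho_0 e^{i\theta})$ survives; this is $O\big(\sup_\rho|\partial_\rho h(\rho e^{i\theta})|\cdot|r(\theta)-\rho_0|\big)$, and because $h$ starts at quadratic order one has $|\partial_\rho h(\rho e^{i\theta})|\lesssim\sqrt{1-T}\,\rho\lesssim\sqrt{1-T}\,\sqrt s$ near the level set, while $|r(\theta)-\rho_0|=|r(\theta)^2-\rho_0^2|/(r(\theta)+\rho_0)\lesssim\sqrt{1-T}\,s/\sqrt s=\sqrt{1-T}\,\sqrt s$. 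Multiplying, the $h$‑contribution is $\lesssim(1-T)s$ as well, yielding $\mu(t)\le(1+C_0(1-T))s$ for all $t\in[t_0,T]$, with $C_0$ and the threshold $T_0$ depending only on $t_0$ (through $\ln(1/t_0)$). Thus the two features that preserve the power of $1-T$ are the vanishing circular average of $h$ (harmonicity) and the vanishing of its linear term $c_{10}$ (criticality of the maximum of $u_\rho$); failing either one, the method would only produce an error of order $\sqrt{1-T}$, and it is precisely this point that makes the lemma delicate.
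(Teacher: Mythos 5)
Your argument is correct in substance and lands on exactly the two cancellations that make the lemma work (vanishing circular average of the harmonic part, and vanishing of its linear term via criticality of the maximum). The execution, however, is genuinely different from the paper's: the paper stops at the \emph{upper bound} $e^{\pi r^2}u_\rho/T\leq 1+\delta^2(e^{\pi r^2}-1)+h(re^{i\theta})$, introduces the comparison function $g_\theta(r,\sigma)$, and completes the estimate by a homotopy-in-$\sigma$ argument borrowed from steps III--IV of \cite[Lemma~2.1]{GGRT}; you instead take the logarithm of $u_\rho/T$, split it exactly as $-\pi|z|^2+h+W$, observe that the super-level set is star-shaped, and use the exact polar-coordinate identity $\mu(t)=s+\tfrac1{2\pi}\int_0^{2\pi}(h+W)\bigl(r(\theta)e^{i\theta}\bigr)\,d\theta$ before applying the mean-value property of $h$ on the reference circle of radius $\sqrt{s/\pi}$. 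Your route is more self-contained and arguably more transparent about where the factor $1-T$ (rather than $\sqrt{1-T}$) comes from. What it costs is some care that you gloss over: you need to \emph{prove}, not just assert, that for $T$ close enough to $1$ (and $t\geq t_0$) the set $\{u_\rho>t\}$ is contained in a fixed disk $\{|z|\leq R(t_0)\}$, that $U>0$ there, and that $\partial_\rho\Phi>0$ there --- without this, the star-shapedness, the parametrization $r(\theta)$, and the exact area formula are not available. This can be established from the global bound of Step~I of the paper (which shows $u_\rho(z)\leq e^{-\pi|z|^2}+\delta^2+\sqrt2\pi\delta|z|^2 e^{-\pi|z|^2/2}<t_0$ for $|z|>R(t_0)$ and $\delta$ small), a point that the paper's approach handles automatically because $\{g_\theta(\cdot,1)<1\}$ is a bounded superset for free. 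You should also make explicit that $W$ absorbs the quadratic and higher terms of $\ln(1+(U-T)/T)$, not only the mixed monomials of $U$; those extra terms are of size $O((1-T)|z|^2)$ on fixed disks and therefore harmless, but they include further holomorphic (hence harmonic) pieces, so one should not be left with the impression that $h$ is ``the'' harmonic part of $\ln(u_\rho/T)+\pi|z|^2$.
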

This is a nontrivial generalization of Lemma 2.1 in \cite{GGRT}, which
dealt with the particular case of pure states where $u_\rho(z)=|F(z)|^2 e^{-\pi |z|^2}$
(cf. \eqref{eq ffock}). Since only the 
first two steps of the proof in \cite{GGRT}
need changes,  we will limit the proof to the initial part,
which is different, and refer to
\cite{GGRT} for the last two steps which require no changes. 

\begin{proof}
\noindent{\textsc{Step I. }}    We may assume that $u_\rho(z)$ achieves its maximum $T$ at $z=0$ so that, letting $z=0$ in \eqref{eq ffock} and \eqref{normaliz}, 
and recalling that $\sum_j p_j=1$, one has
    \begin{equation}\label{expT}
      \sum_{j\geq 0} p_j |\aj_0|^2=T, \quad\text{and}\quad 
      \sum_{j\geq 0} p_j \bigl(1-|\aj_0|^2\bigr)=1-T.
    \end{equation}
Using Cauchy--Schwarz and \eqref{normaliz} we can estimate the sum
\begin{equation}
\label{eq101}
\begin{aligned}
  &\left| \sum_{n\geq 2} \aj_n \frac {\pi^{n/2} z^n}{\sqrt{n!}}\right|^2
  \leq \left(\sum_{n\geq 2} |\aj_n|^2\right)
  \left(\sum_{n\geq 2} \frac {\pi^{n} |z|^{2n}}{n!}\right)=\\
\leq &    \bigr(1-|\aj_0|^2\bigr) \bigl(e^{\pi  |z|^2} -1-\pi|z|^2\bigr),
\end{aligned}
\end{equation}
while  the same sum starting at $n=1$ can be estimated  as    
 \begin{equation}
   \label{estRj}
   \left| \sum_{n\geq 1} \aj_n \frac {\pi^{n/2} z^n}{\sqrt{n!}}\right|^2
\leq    \bigl(1-|\aj_0|^2\bigr) \bigl(e^{\pi  |z|^2} -1).
 \end{equation}
From \eqref{normaliz}  we have
\begin{equation}
\label{mqFj}
    |F_j(z)|^2= |\aj_0|^2+
\left| \sum_{n\geq 1} \aj_n \frac {\pi^{n/2} z^n}{\sqrt{n!}}\right|^2
    +2 \Re \left(\overline{\aj_0} \sum_{n\geq 1} \aj_n \frac {\pi^{n/2} z^n}{\sqrt{n!}}\right),
\end{equation}   
hence using \eqref{estRj}, multiplying by $p_j$, and then summing
over $j$, 
one obtains from \eqref{eq ffock}
    \begin{equation}
      \label{expu2}
   \frac{e^{\pi |z|^2} u_\rho(z)}T
   \leq \sum_{j}
   p_j \frac{|\aj_0|^2}{T}+
   \sum_j p_j \frac{\bigl(1-|\aj_0|^2\bigr)\bigl(e^{\pi  |z|^2} -1)}T+
   h(z),
          \end{equation}
where $h(z)$ denotes the harmonic function defined by
    \begin{equation}
    \label{defh2}
    h(z):=
    \frac  2 T\sum_{j\geq 1} p_j \Re \left(\overline{\aj_0} \, \sum_{n\geq 1} \aj_n \frac {\pi^{n/2} z^n}{\sqrt{n!}}\right).
    \end{equation}
Finally, defining the number
\begin{equation}
    \label{defdelta}
\delta:=\sqrt{\frac{1-T}{T}},    
    \end{equation}
recalling \eqref{expT}, and renaming $z=r e^{i\theta}$, we can rewrite \eqref{expu2} as 
\begin{equation}
\label{estu}
\frac {e^{\pi r^2}u_\rho\bigl(re^{i\theta}\bigr)}T\leq 1+\delta^2\bigl(e^{\pi r^2}-1\bigr)+
h\bigl(re^{i\theta}\bigr).
\end{equation}
       
\noindent{\textsc{Step II: }} Estimates for $h$.
Since $z=0$ is a critical point for $u_\rho(z)$, by \eqref{eq ffock} it is also a critical point for
 the function $\sum_j p_j |F_j(z)|^2$, and since in the right hand side of \eqref{mqFj}
 the  coefficient of the linear term in $z$ is
 $2\Re \overline{a}_0^{(j)} \, \aj_1 \pi^{1/2}$,
we must have
    \[
    2\sum_{j\geq 1} p_j \Re \overline{\aj}_0 \, \aj_1 \pi^{1/2}   =0.
    \]
Thus,  \eqref{defh2}
    simplifies to
    \begin{equation}
    \label{defh1}
    h(z)=
    \frac 2 T\sum_{j\geq 1} p_j \Re \left(\overline{\aj_0} \, \sum_{n\geq 2} \aj_n \frac {\pi^{n/2} z^n}{\sqrt{n!}}\right)
    \end{equation}
from which, 
using Cauchy--Schwarz and then  \eqref{eq101}, we obtain 
\begin{equation}
\label{esthq}
\begin{aligned}
  |h(z)|^2 
   &\leq \frac{4}{T^2}
   \left(
  \sum_{j\geq 0} p_j
   |\aj_0|^2\right)\left(\sum_{j\geq 0} p_j  \left|\sum_{n\geq 2} \aj_n \frac {\pi^{n/2} z^n}{\sqrt{n!}}\right|^2\right)\\
   &\leq \frac{4}{T^2} \left(
  \sum_{j\geq 0} p_j
   |\aj_0|^2\right)   \sum_{j\geq 0} p_j \bigr(1-|\aj_0|^2\bigr)\bigl(e^{\pi  |z|^2} -1-\pi|z|^2\bigr)\\
    &=  4\delta^2\bigl(e^{\pi  |z|^2} -1-\pi|z|^2\bigr),
   \end{aligned}
   \end{equation}
having used \eqref{expT} and \eqref{defdelta} in the last passage.
Finally, using the elementary inequality $2(e^x-1-x)\leq x^2 e^x$ with $x=\pi |z|^2$,
we obtain from the last estimate
\begin{equation}
\label{esth}
|h(z)|\leq \sqrt 2\pi\delta |z|^2 e^{\frac{\pi |z|^2}2},\quad\forall z\in\C.
\end{equation}
We now want to estimate, uniformly with respect to the angular variable $\theta$,
the first and second radial derivatives of $h(r e^{i\theta})$. Letting
$z=r e^{i\theta}$ in \eqref{defh1} and differentiating with respect to $r$, one has
\begin{equation}
\label{esthrq}
\left| \frac{\partial h(r e^{i\theta})}{\partial r}\right|
\leq 
 \frac 2 T\sum_{j\geq 1} p_j |\aj_0|  \sum_{n\geq 2} |\aj_n|
 \frac {\pi^{n/2} n r^{n-1}}{\sqrt{n!}}.
\end{equation}
Observe that, since $\frac{n^2}{n!}\leq\frac{2}{(n-2)!}$ for every $n\geq 2$,
\begin{equation}
\label{eq104}
\sum_{n\geq 2} \frac{n^2 \pi^n r^{2(n-1)}}{n!}
\leq 2 \sum_{n\geq 2} \frac{ \pi^n r^{2(n-1)}}{(n-2)!}=2\pi^2 r^2 e^{\pi r^2}
\end{equation}
so that,
similarly to \eqref{eq101}, we now have
\begin{equation*}
\begin{aligned}
  &\left| \sum_{n\geq 2} |\aj_n| \frac {\pi^{n/2} n r^{n-1}}{\sqrt{n!}}\right|^2
  \leq \left(\sum_{n\geq 2} |\aj_n|^2\right)
  \left(\sum_{n\geq 2} \frac {\pi^{n} n^2 r^{2(n-1)}}{n!}\right)
&\leq    \bigr(1-|\aj_0|^2\bigr)2\pi^2 r^2 e^{\pi r^2}.
\end{aligned}
\end{equation*}
Thus, after squaring both sides of \eqref{esthrq}, one can proceed  as done
in \eqref{esthq}, now using the last estimate in place of \eqref{eq101}, to obtain
\begin{equation*}
\left| \frac{\partial h(r e^{i\theta})}{\partial r}\right|^2 \leq 
   \frac{4}{T^2} \left(
  \sum_{j\geq 0} p_j
   |\aj_0|^2\right)   \sum_{j\geq 0} p_j \bigr(1-|\aj_0|^2\bigr)
   2\pi^2 r^2 e^{\pi r^2}
    =  8\delta^2 \pi^2 r^2 e^{\pi r^2},
   \end{equation*}
which for future reference we record as
\begin{equation}
\label{esthr}
        \left| \frac{\partial h(r e^{i\theta})}{\partial r}\right|
        \leq 2\sqrt{2}\delta\pi r e^{\frac{\pi r^2}2},\quad\forall r\geq 0.
\end{equation}
Finally, by a similar argument, now relying on the estimate
\[
\sum_{n\geq 2} \frac{n^2(n-1)^2 \pi^n r^{2(n-2)}}{n!}
\leq C \bigl(1+r^2\bigr)^2 e^{\pi r^2}
\]
in place of \eqref{eq104}, one obtains that
\begin{equation}
\label{esthrrq}
        \left| \frac{\partial^2 h(r e^{i\theta})}{\partial r^2}\right|
        \leq C \delta(1+ r^2) e^{\frac{\pi r^2}2},\quad\forall r\geq 0
\end{equation}
for some universal constant $C>0$.

\noindent{\textsc{Step III. }} Now the proof can be completed exactly as in
steps \textsc{III} and \textsc{IV} of \cite[Lemma 2.1]{GGRT}, with the same notation and symbols except for $u$ in place of $u_\rho$. Indeed,  given $t>0$, defining for fixed $\theta\in[0,2\pi]$  
the function
\begin{equation}
\label{defgt}
g_\theta(r,\sigma):=e^{\pi r^2}\left(\frac t T-\delta^2\right)+\delta^2
-\sigma h\bigl(re^{i\theta}\bigr),\quad
r\geq 0,\quad \sigma\in [0,1]
\end{equation}
as in (2.20) of   \cite{GGRT}, using \eqref{estu} it is immediate to check the validity
of the implication
\begin{equation}
\label{implication}
u_\rho\bigl(re^{i\theta}\bigr)>t\quad \Longrightarrow\quad
 g_\theta(r,1)<1,
\end{equation}
which coincides with (2.19) of \cite{GGRT}. Moreover, although the
harmonic function $h(z)$ is not the same, the relevant estimates
\eqref{esth}, \eqref{esthr} and \eqref{esthrrq} coincide verbatim with
(2.14), (2.17) and (2.18) of \cite{GGRT} and these, together with
\eqref{defgt} and \eqref{implication}, are enough to prove \eqref{estmuimproved} 
as explained in steps 
  \textsc{III} and \textsc{IV} of \cite[Lemma 2.1]{GGRT}. Therefore we do not
  reproduce those steps here, and refer to \cite{GGRT}.
\end{proof}
We can then prove the desired improvement of \eqref{eq debole}. 
\begin{lemma}\label{lemmaggrt}

With the notation of Theorem \ref{lem1} and \eqref{defH}, there is a constant $C>0$ such that for all density matrices $\rho$ one has
	\begin{equation}\label{HT}
		1-T\leq C H({t^\ast}). 
	\end{equation}
\end{lemma}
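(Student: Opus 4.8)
The plan is to split into two regimes according to whether $T$ is bounded away from $1$ or close to it; we may assume $T<1$, the case $T=1$ being trivial since then $\rho$ is a coherent state and both sides of \eqref{HT} vanish. In the regime where $T$ is bounded away from $1$ the estimate is essentially free: since $H$ attains its maximum at $t^\ast$ and $\mu(t)=0$ for $t\geq T$ by Theorem~\ref{lem1}, one has
\[
H(t^\ast)\geq H(T)=\int_T^1(-\log s)\,ds=1-T+T\log T ,
\]
and the function $T\mapsto(1-T+T\log T)/(1-T)$ is continuous and strictly positive on $(0,1)$, hence bounded below by a constant $m_1=m_1(T_1)>0$ on any interval $(0,T_1]$ with $T_1<1$. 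Thus $1-T\leq m_1^{-1}H(t^\ast)$ whenever $T\leq T_1$, and it only remains to treat $T$ close to $1$.

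For that regime the idea is to produce a lower bound for $\mu_0-\mu$ that is linear in $1-T$ on a short interval just below $T$, and then integrate it. First I would record the elementary complementary bound $\mu(t)\geq\log(T/t)$ for $t\in(0,T)$, obtained by integrating the differential inequality $\mu'(t)\leq-1/t$ of Theorem~\ref{lem1}(c) from $t$ up to $T$, where $\mu$ vanishes. Next I would invoke Lemma~\ref{lemma Paolo} with a fixed $t_0$, say $t_0=\tfrac12$, enlarging the resulting constant so that $C_0\geq1$ (harmless, since increasing $C_0$ only weakens \eqref{estmuimproved}); this gives $\mu(t)\leq(1+C_0(1-T))\log(T/t)$ on $[t_0,T]$ as soon as $T\geq T_0$, and subtracting from $\mu_0(t)=-\log t$ yields
\[
\mu_0(t)-\mu(t)\geq-\log T-C_0(1-T)\log(T/t),\qquad t\in[t_0,T].
\]
Now set $\eta:=\dfrac{-\log T}{2C_0(1-T)}$ and $\alpha:=Te^{-\eta}$. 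On $[\alpha,T]$ one has $\log(T/t)\leq\eta$, so the right-hand side above is at least $-\log T-C_0(1-T)\eta=-\tfrac12\log T\geq\tfrac12(1-T)$, using $-\log T\geq1-T$; integrating gives $\int_\alpha^T(\mu_0-\mu)\geq\tfrac12(1-T)(T-\alpha)$. Since $\eta\geq 1/(2C_0)$, again by $-\log T\geq1-T$, the factor $T-\alpha=T(1-e^{-\eta})$ is bounded below by the positive constant $T_1(1-e^{-1/(2C_0)})$ once $T\geq T_1$. Finally $H(t^\ast)\geq H(\alpha)=\int_\alpha^1(\mu_0-\mu)\geq\int_\alpha^T(\mu_0-\mu)$, the tail $\int_T^1\mu_0$ being nonnegative, and combining with the first regime gives \eqref{HT} with $C:=\max\bigl\{m_1^{-1},\,2/(T_1(1-e^{-1/(2C_0)}))\bigr\}$.

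The point requiring care — and what I expect to be the main obstacle — is that Lemma~\ref{lemma Paolo} is only available on $[t_0,T]$, with $t_0$ fixed in advance and $T_0,C_0$ depending on $t_0$ without any quantitative control (so one cannot send $t_0\to0$). This is exactly why \eqref{estmuimproved} is used only on the short interval $[\alpha,T]$ rather than on all of $[t_0,T]$: there $\log(T/t)$ is small, so the error term $C_0(1-T)\log(T/t)$ is dominated by the main term $-\log T\gtrsim1-T$. To make this legitimate one must check $\alpha\geq t_0$: since $\eta\leq 1/(2C_0T)$ (from $-\log T\leq(1-T)/T$) and $\tau\mapsto\tau e^{-1/(2C_0\tau)}$ is increasing, one gets $\alpha\geq T_1e^{-1/(2C_0T_1)}$ for $T\geq T_1$, and as $T_1\uparrow1$ this tends to $e^{-1/(2C_0)}\geq e^{-1/2}>\tfrac12=t_0$ because we arranged $C_0\geq1$; hence it suffices to choose $T_1\in(\max(T_0,\tfrac12),1)$ close enough to $1$. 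The remaining inequalities invoked ($-\log T\geq1-T$ and $-\log T\leq(1-T)/T$ for $T\in(0,1)$, and the monotonicity above) are elementary and I would not dwell on them.
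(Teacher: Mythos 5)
Your proof is correct, and it supplies in self-contained form exactly the argument that the paper's proof of Lemma~\ref{lemmaggrt} outsources to \cite[Corollary 2.2, Lemma 2.4]{GGRT}: split into the two regimes $T\leq T_1$ and $T\geq T_1$, handle the first with the exact computation $H(t^\ast)\geq H(T)=1-T+T\log T$ (the paper uses the equivalent bound \eqref{eq debole}, i.e.\ $H(t^\ast)\geq(1-T)^2/2$), and handle the second with Lemma~\ref{lemma Paolo} applied on a carefully chosen short interval $[\alpha,T]$ just below the maximum, where $\alpha=Te^{-\eta}$ with $\eta=\tfrac{-\log T}{2C_0(1-T)}$ is calibrated so that the correction term $C_0(1-T)\log(T/t)$ is absorbed by the main term $-\log T\gtrsim 1-T$. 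The step that needs justification — that $\alpha\geq t_0$ once $T_1$ is taken close enough to $1$, using $C_0\geq1$ — is checked correctly, as is the lower bound $T-\alpha\geq T_1(1-e^{-1/(2C_0)})$ via $\eta\geq 1/(2C_0)$. Two small remarks: the bound $\mu(t)\geq\log(T/t)$ that you record at the start is never used in the subsequent chain and can be dropped; and the inequality $H(t^\ast)\geq H(\alpha)=\int_\alpha^1(\mu_0-\mu)\geq\int_\alpha^T(\mu_0-\mu)$ is fine since $\mu\equiv0$ and $\mu_0\geq0$ on $[T,1]$, but it is worth noting that your pointwise lower bound $\mu_0-\mu\geq(1-T)/2>0$ on $[\alpha,T]$ already forces $\alpha>t^\ast$ via Theorem~\ref{lem1}(d), so the integrand has no sign change there. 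Compared with the paper, your proof avoids the change of variable to the inverse function $u^\ast(s)$ of $\mu$ and works directly with the distribution function, which some readers may find more transparent; the price is that one must reprove the two steps from \cite{GGRT} rather than cite them, but the computations are short and the result is an explicit constant $C=\max\{m_1^{-1},\,2/(T_1(1-e^{-1/(2C_0)}))\}$ in terms of the data $T_0,C_0$ from Lemma~\ref{lemma Paolo}.
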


\begin{proof}
	Using Lemma \ref{lemma Paolo}, together with the properties in Theorem \ref{lem1} and \eqref{eq debole} one can repeat verbatim the proofs of \cite[Corollary 2.2,Lemma 2.4]{GGRT} 	
	 and obtain that, for some constant $C>0$,
	\[
	1-T\leq C\int_0^{s^\ast} (e^{-s}-u^\ast(s))\, ds,
	\]
	where $u^\ast(s)$ is the inverse function of $s=\mu(t)$ and $t^\ast=e^{-s^\ast}$, hence $s^\ast=u(t^\ast)$.  Since 
	\[
	\int_{t^\ast}^1 (\mu_0(t)-\mu(t))\, dt= \int_0^{s^\ast} (e^{-s}-u^\ast(s))\, ds,
	\]
	we obtain \eqref{HT} with the same constant $C$. 
\end{proof}

We are now ready to prove Theorem \ref{thm1}.

\begin{proof}[Proof of Theorem \ref{thm1}]
	In follows from the proof of \cite[Theorem 1.3]{KuNiOCTi} that the integrals in \eqref{SF} could be $-\infty$, but not $+\infty$ and that, if the integral on the right-hand side is $-\infty$, then the same holds for the integral on the left-hand side. Hence we can suppose that both integrals are finite. 
	
	With the notation of Theorem \ref{lem1}, 
	in terms of distribution functions, we have to prove that, if
	\begin{equation}
		\label{A}
		\varepsilon \geq \int_0^1 \Phi'(t)(\mu_0(t)-\mu(t))\,dt,
	\end{equation}  
	then
	\begin{equation}
		\label{B}
		1-T \leq C  \varepsilon,
	\end{equation}  
	for some constant $C$ that depends only on $\Phi$.
	Indeed, \eqref{SF} follows immediately from \eqref{B} and Proposition \ref{pro fpro2}.
	Of course we can suppose $T<1$. 
	
	Let $H(t)$ be the function in \eqref{defH}. The proof of \eqref{B} is elementary and is based on the following steps: 1) prove that
	$H(t_1)\leq C_1\eps$ at some point $t_1$ not too close to $t=0$ nor to $t=1$; 2) prove that this implies $H({t^\ast})\leq C_2\eps$, and conclude using \eqref{HT}.

	\bigskip
	
	\noindent{\textsc{Step I. }} Since $\Phi$ is convex, but not linear, there are two points $0<a<b<1$ such that
	\[
	\Phi'(b)-\Phi'(a)>0 \,.
	\]
	(Here and throughout, for definiteness, by $\Phi'(t)$ we mean e.g.\ the right derivative $\Phi'(t^+)$.) Let $t^\ast$ be as in Theorem \ref{lem1}. Then from \eqref{A}, since $\mu_0(t)-\mu(t)=-H'(t)$,
	\begin{align}
		\label{esti}
		\varepsilon\geq \int_0^1 \Phi'(t)\bigl(-H'(t) \bigr)\,dt=
		\int_0^1 \bigl(\Phi'(t)-\Phi'({t^\ast})\bigr)\bigl(-H'(t) \bigr)\,dt \,.
	\end{align}
	Note that $\bigl(\Phi'(t)-\Phi'({t^\ast})\bigr)\bigl(-H'(t) \bigr)\geq 0$ everywhere.
	Thus, if $a < {t^\ast} < b$, we have
	\begin{align*}
		\varepsilon &\geq 
		\int_0^a \bigl(\Phi'(t)-\Phi'({t^\ast})\bigr)\bigl(-H'(t) \bigr)\,dt
		+
		\int_b^1 \bigl(\Phi'(t)-\Phi'({t^\ast})\bigr)\bigl(-H'(t) \bigr)\,dt
		\\
		&=\int_0^a \bigl(\Phi'({t^\ast})-\Phi'(t)\bigr)H'(t)\,dt
		+
		\int_b^1 \bigl(\Phi'(t)-\Phi'({t^\ast})\bigr)\bigl(-H'(t) \bigr)\,dt\\
		&\geq
		\int_0^a \bigl(\Phi'({t^\ast})-\Phi'(a)\bigr)H'(t)\,dt
		+
		\int_b^1 \bigl(\Phi'(b)-\Phi'({t^\ast})\bigr)\bigl(-H'(t) \bigr)\,dt\\
		&=
		\bigl(\Phi'({t^\ast})-\Phi'(a)\bigr)H(a)
		+
		\bigl(\Phi'(b)-\Phi'({t^\ast})\bigr)H(b).
	\end{align*}
	Letting $t_1:=a$ or $t_1:=b$ in such a way that
	$H(t_1)=\min \{H(a),H(b)\}$, we get
	\begin{equation}
		\label{eq7}
		\eps
		\geq 
		\bigl(\Phi'(b)-\Phi'(a)\bigr)H(t_1).
	\end{equation}
	If, on the other hand, ${t^\ast}\geq b$, then from \eqref{esti}
	\begin{align*}
		\varepsilon &\geq 
		\int_0^a \bigl(\Phi'(t)-\Phi'({t^\ast})\bigr)\bigl(-H'(t) \bigr)\,dt
		=\int_0^a \bigl(\Phi'({t^\ast})-\Phi'(t)\bigr)H'(t)\,dt\\
		&\geq
		\int_0^a \bigl(\Phi'(b)-\Phi'(a)\bigr)H'(t)\,dt
		=
		\bigl(\Phi'(b)-\Phi'(a)\bigr)H(a),
	\end{align*}
	and \eqref{eq7} holds again.
	Similarly, if ${t^\ast}\leq a$, then
	from \eqref{esti}
	\begin{align*}
		\varepsilon &\geq 
		\int_b^1 \bigl(\Phi'(t)-\Phi'({t^\ast})\bigr)\bigl(-H'(t) \bigr)\,dt\\
		&\geq
		\int_b^1 \bigl(\Phi'(b)-\Phi'(a)\bigr)\bigl(-H'(t) \bigr)\,dt
		=
		\bigl(\Phi'(b)-\Phi'(a)\bigr)H(b),
	\end{align*}
	and \eqref{eq7} holds again.
	Summing up, in all cases we get
	\begin{equation}
		\label{estHt1}
		H(t_1)\leq \frac\eps{\Phi'(b)-\Phi'(a)},\quad\text{where either $t_1=a$ or $t_1=b$}.
	\end{equation}

	\bigskip
	
\noindent{\textsc{Step II. }} Now that $t_1\in\{a,b\}$ has been fixed, we may have
	$t_1\leq {t^\ast}$, or ${t^\ast}<t_1\leq T$, or $t_1>T$.

	If $t_1\leq {t^\ast}$,  since $H(0)=0$ and $H$ is concave on
	$[0,T]$, the ratio $H(t)/t$ is nonincreasing on $(0,T]$, so that
	\[
	H({t^\ast})\leq
	\frac {H({t^\ast})}{{t^\ast}}\leq\frac{H(t_1)}{t_1} \leq \frac{H(t_1)}{a}\leq \frac{\eps}{a (\Phi'(b)-\Phi'(a))}
	\]
	by \eqref{estHt1}. Combining with \eqref{HT}, we find
	\begin{equation*}
		1-T\leq \frac{C\eps}{a (\Phi'(b)-\Phi'(a))}.
	\end{equation*}

	If ${t^\ast}\leq t_1< T$, we consider the function $\tilde{H}(t)$ given by 
	\[
	\tilde{H}(t) :=
	\begin{cases}
		H(t) & \text{if}\ t\in [0,T] \,, \\
		H(T)+H'(T)(t-T) & \text{if}\ t\in (T,t_2] \,,
	\end{cases}
	\]
	where $H(T) =\int_T^1 (-\ln \tau)\, d\tau$ and $t_2:=(T-1)/\ln T\in (T,1)$ is the point $t$ where  $H(T)+H'(T)(t-T)=0$. Hence $\tilde{H}$ is concave on $[0,t_2]$ and $\tilde{H}(t_2)=0$. Since $t_2\to 1$ as $T\to 1$, we see that there is a $c=c(b)>0$ such that if $1-T<c$ we have, say, $t_2-b>\frac{1-b}{2}$. 
	
	Arguing as above, and assuming for the moment $1-T<c$, we have 
	\[
	H({t^\ast})\leq
	\frac {H({t^\ast})}{t_2-{t^\ast}}\leq\frac{H(t_1)}{t_2-t_1} \leq \frac{H(t_1)}{t_2-b}\leq \frac{2\eps}{(1-b) (\Phi'(b)-\Phi'(a))}.
	\]
	Combining with \eqref{HT}, we find
	\begin{equation*}
		1-T\leq \frac{2C\eps}{(1-b)(\Phi'(b)-\Phi'(a))}.
	\end{equation*}
	If $1-T\geq c$ we simply use the fact that, by \eqref{eq agg} (applied with $t=T$), 
	\[
	H(t_1)\geq H(T)\geq \frac{(1-T)^2}{2}\geq \frac c2\, (1-T)
	\]
	and one concludes using \eqref{estHt1}. 
	
	Finally, if $t_1> T$, then from  \eqref{estHt1} and \eqref{eq agg} (applied with $t=t_1$)  we get
	\[
	\frac\eps{\Phi'(b)-\Phi'(a)}\geq H(t_1)\geq \frac{(1-t_1)^2}2
	\geq \frac{(1-b)^2}2 \,,
	\]
	which is a lower bound for $\eps$ (so that \eqref{B} is trivial). In particular, we have
	\[
	\frac\eps{\Phi'(b)-\Phi'(a)}
	\geq \frac{(1-b)^2}2 \geq \frac{(1-b)^2}2 (1-T)
	\]
	and thus also in this case
	\begin{equation}
		\label{est2}
		1-T\leq\frac{2\eps}{(1-b)^2 (\Phi'(b)-\Phi'(a))}.
	\end{equation}
	This completes our first proof of Theorem \ref{thm1}.
\end{proof}
\begin{remark}\label{multidim} Theorem \ref{thm1} holds in dimension $d$ essentially with the same statement -- just replace $\R^2$ by $\R^{2d}$ in the integrals. Precisely, consider the $L^2(\R^d)$-normalized Gaussian
$$
\phi(x) := 2^{d/4} e^{-\pi |x|^2} \,,
\qquad\text{for all}\ x\in\R^d \,,
$$
as well as the coherent states, for $(x_0,\omega_0)\in\R^d\times \R^d$,
$$
\phi_{(x_0,\omega_0)}(x) = e^{2\pi i\omega_0\cdot x} \phi(x-x_0) 
\qquad\text{for all}\ x\in\R^d,
$$
and the corresponding transform 
$$
(\mathcal V f)(x_0,\omega_0) = \int_{\R^d} \overline{\phi_{(x_0,\omega_0)}(x)} f(x)\,dx.
$$
Also, if $\rho$ is a density matrix on $L^2(\mathbb{R}^d)$, we set
$$
D[\rho] := \inf_{x_0, \omega_0\in\R^d} \|\rho- |\phi_{(x_0,\omega_0)}\rangle \langle \phi_{(x_0,\omega_0)}| \|_{S_1}.
$$
Setting $u_\rho(x,\omega)=\langle \varphi_{(x,\omega)}, \rho \varphi_{(x,\omega)}\rangle$ as in \eqref{eq fcov}, Lemma \ref{lem pre} easily extends in dimension $d$. Proposition \ref{pro fpro2} and its proof continue to hold verbatim, whereas in Proposition \ref{pro logsub} the estimate now reads $\Delta \ln u_\rho\geq -4\pi d$. The distribution function $\mu(t):=|\{u_\rho>t\}|$ enjoys the same properties as in Theorem \ref{lem1}, except that the inequality $\mu'(t)\leq -1/t$ is now replaced by
\[
\mu'(t)\leq -\frac{d\mu(t)^{1-1/d}}{(d!)^{1/d} t} \quad {\rm for\ a.e.}\ t\in (0,T),
\]
and
\[
\mu_0(t):=|\{|\mathcal{V} \varphi|^2>t\}|
=\frac{1}{d!}\Big(\ln_{+}\frac{1}{t} \Big)^d.
\]
Lemma \ref{lemma Paolo} extends as well (the computations in the proof are a little more involved; cf. also \cite[(7.25)]{GGRT}) and Lemma \ref{lemmaggrt} continues to hold as it is (the function $e^{-s}$ in the proof has to be replaced by $e^{-(d! s)^{1/d}}$, that is the inverse function of $\mu_0(t)$, $t\in (0,1]$, in this case). 

Also, formula \eqref{eq agg} now reads

\[
	H(t)=\int_t^1(\mu_0(\tau)-\mu(\tau))\, d\tau=\int_t^1 \mu_0(\tau)\, d\tau\geq \frac{(1-t)^{d+1}}{(d+1)!}\qquad t\in[T,1].
\]
Then the generalization to higher dimensions of the proof of Theorem \ref{thm1} requires only obvious changes.
\end{remark}


\section{A second proof of the main result}
We give a second proof of Theorem \ref{thm1} in the case of pure states, namely 
\[
\rho=|f\rangle\langle f |\qquad{\rm with}\ \|f\|_2=1,
\]
and therefore in the statement of Theorem \ref{thm1}
\[
u_\rho(x,\omega)=|\mathcal{V}f(x,\omega)|^2. 
\]
Also, if $\|f\|_2=1$, we set (with abuse of notation),
\[
D[f]:= \inf_{(x_0,\omega_0)\in\mathbb{R}^2,\, |c|=1} \|f-c\varphi_{(x_0,\omega_0)}\|_2,
\]
which is equivalent to the quantity $D[\rho]$ in \eqref{eq drho} if $\rho=|f\rangle\langle f |$; see Remark \ref{rem pure}. \par\medskip 

This proof is related to Hardy--Littlewood majorization theory. While we present the proof in a self-contained way and will not use anything from this theory, it might be helpful for the reader to view it from this point of view (see, e.g., \cite[Theorems 108, 249, 250]{HaLiPo}, \cite[Corollary 2.1]{AlTrLi}, \cite[Theorem 15.27]{Si3} and also \cite[Chapter 2, Proposition 3.3]{BeSh}). Using majorization theory one can show that the generalized Wehrl conjecture and the Faber--Krahn inequality are equivalent. The main idea of this section is to show that stability for the Faber--Krahn inequality implies stability for the generalized Wehrl entropy bound.

\subsection*{Proof of the generalized Wehrl conjecture given the Faber--Krahn inequality}

In \cite[Section 5]{Fr} we have shown how the generalized Wehrl conjecture implies the Faber--Krahn theorem of \cite{NiTi}. As a warm up, let us begin by showing the converse. For the sake of conciseness, we do not characterize the cases of equality.

Let $\Phi:[0,1]\to\R$ be continuous and convex with $\Phi(0)=0$. For simplicity we also assume that $\Phi'(0)>-\infty$. (Here $\Phi'(0)$ denotes the right-sided derivative of $\Phi$, which exists by monotonicity. The assumption that it is $>-\infty$ excludes the important special case $\Phi(u) = u\ln u$, which we will discuss separately afterwards.) We can write, using $\Phi(0)=0$ and $\Phi'(0)>-\infty$,
\begin{equation}
	\label{eq:superposition}
	\Phi(u) = \Phi'(0) u + \int_0^\infty (u-\tau)_+ \Phi''(\tau)\,d\tau
	\qquad\text{for all}\ u\in[0,1] \,.
\end{equation}
Here the expression $\Phi''(\tau)\,d\tau$ should be interpreted as a (not necessarily absolutely continuous) measure on $[0,1]$. It follows that
$$
\iint_{\R^2} \Phi(|\mathcal Vf(x,\omega)|^2)\,dx\,d\omega = \Phi'(0) + \int_0^\infty \iint_{\R^2} \left( |\mathcal Vf(x,\omega)|^2 - \tau \right)_+ dx\,d\omega\, \Phi''(\tau)\,d\tau.
$$
We fix $\tau>0$ and set
$$
E := \{ |\mathcal Vf|^2>\tau \} \,.
$$
Then, by the Faber--Krahn inequality of \cite{NiTi},
\begin{align*}
	\iint_{\R^2} \left( |\mathcal Vf(x,\omega)|^2 - \tau \right)_+ dx\,d\omega
	& = \iint_E |\mathcal Vf(x,\omega)|^2 \,dx\,d\omega - \tau |E| \\
	& \leq \iint_{E^*} |\mathcal V\phi(x,\omega)|^2 \,dx\,d\omega - \tau |E^*| \\
	& = \iint_{E^*} \left( |\mathcal V\phi(x,\omega)|^2 - \tau \right) dx\,d\omega \,,
\end{align*}
where $E^*$ denotes the disk in $\R^2$, centered at the origin, with the same area as $E$. Now it is immediate that
\begin{align*}
	\iint_{E^*} \left( |\mathcal V\phi(x,\omega)|^2 - \tau \right) dx\,d\omega 
	& \leq \iint_{E^*} \left( |\mathcal V\phi(x,\omega)|^2 - \tau \right)_+ dx\,d\omega \\
	& \leq \iint_{\R^2} \left( |\mathcal V\phi(x,\omega)|^2 - \tau \right)_+ dx\,d\omega \,.
\end{align*}
Inserting this into the above expression, we find
\begin{align*}
	\iint_{\R^2} \Phi(|\mathcal Vf(x,\omega)|^2)\,dx\,d\omega & \leq \Phi'(0) + \int_0^\infty \iint_{\R^2} \left( |\mathcal V\phi(x,\omega)|^2 - \tau \right)_+ dx\,d\omega\, \Phi''(\tau)\,d\tau \\
	& = \iint_{\R^2} \Phi(|\mathcal V\phi(x,\omega)|^2)\,dx\,d\omega \,.
\end{align*}
This is the generalized Wehrl entropy inequality of Lieb and Solovej in the special case $\Phi'(0)>-\infty$.

In case $\Phi'(0)=-\infty$, we apply the above inequality with $\Phi(u)$ replaced by the function $\max\{\Phi(u),-(1/\varepsilon) u\}$ and let $\epsilon\to 0$. In the original Wehrl case, i.e.\ $\Phi(u)=u\ln u$, we can alternatively write
\begin{equation}
	\label{eq:superpositionwehrl}
	u\ln u = \int_0^1 \left( (u-\tau)_+ - u \right) \frac{d\tau}{\tau} + \int_1^\infty (u-\tau)_+ \frac{d\tau}{\tau} + u
	\qquad\text{for all}\ u\in [0,\infty) \,.
\end{equation}
This can be verified by explicit computation. Given this formula, the argument is precisely the same as before.


\subsection*{Stability for fixed $\tau$}

To get stability for the generalized Wehrl inequality, we want to follow the same route as above and deduce it from the stability for the Faber--Krahn inequality. The goal is therefore to first prove stability for the special case $\Phi(u)=(u-\tau)_+$ with some fixed $\tau>0$. We shall prove

\begin{theorem}\label{stabtau}
	There is a constant $c>0$ such that for any $\tau>0$ and any $f$ with $\|f\|_2=1$,
	$$
	\iint_{\R^2} (|\mathcal V f|^2-\tau)_+ \,dx\,d\omega \leq \left( 1 -  c \tau D[f]^2 \right) \iint_{\R^2} (|\mathcal V \phi|^2-\tau)_+ \,dx\,d\omega \,.
	$$
	The constant $c$ is explicit and can be chosen as $\min\{ \frac{c_0}2,\frac1{16}\}$, where $c_0$ is the constant in the stability version of the Faber--Krahn inequality. 
\end{theorem}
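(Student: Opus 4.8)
The plan is to follow the same route as the warm‑up (deducing the Lieb--Solovej inequality for $\Phi(u)=(u-\tau)_+$ from the Faber--Krahn inequality), but now feeding in the \emph{quantitative} Faber--Krahn inequality of \cite{GGRT}, and to close with an elementary case analysis in the level $\tau$.

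If $\tau\ge 1$ then $|\mathcal Vf|^2\le 1$ and $|\mathcal V\phi|^2\le1$ pointwise, so both integrands vanish and there is nothing to prove; assume $0<\tau<1$. Put $E:=\{|\mathcal Vf|^2>\tau\}$, which has $|E|\le 1/\tau<\infty$ since $\iint_{\R^2}|\mathcal Vf|^2=1$, and let $E^*$ be the open disk about the origin with $|E^*|=|E|$. The layer--cake formula gives $\iint_{\R^2}(|\mathcal Vf|^2-\tau)_+=\iint_E|\mathcal Vf|^2-\tau|E|$, while a direct computation gives $\iint_{\R^2}(|\mathcal V\phi|^2-\tau)_+=\max_{s\ge0}\bigl((1-e^{-s})-\tau s\bigr)=1-\tau+\tau\ln\tau=:M_\tau$, the maximum being attained exactly at $s=-\ln\tau$. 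Setting $\Psi(s):=(1-e^{-s})-\tau s$ (concave, with $\Psi\le M_\tau=\Psi(-\ln\tau)$) and $\mathcal D:=(1-e^{-|E|})-\iint_E|\mathcal Vf|^2\ge0$ (the Faber--Krahn deficit of \cite{NiTi} for the set $E$), one has
\[
\iint_{\R^2}(|\mathcal Vf|^2-\tau)_+ = \Psi(|E|)-\mathcal D = M_\tau-\Bigl[\bigl(M_\tau-\Psi(|E|)\bigr)+\mathcal D\Bigr],
\]
so it suffices to prove $\bigl(M_\tau-\Psi(|E|)\bigr)+\mathcal D\ge c\,\tau\,D[f]^2\,M_\tau$. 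I would keep two facts at hand: $D[f]^2=2(1-\sqrt T)<2$ with $T=\max|\mathcal Vf|^2$ (Remark \ref{rem pure}), whence $c\,\tau\,D[f]^2<2c$; and the quantitative Faber--Krahn inequality of \cite{GGRT}, which I expect to supply a lower bound of the shape $\mathcal D\ge c_0\,D[f]^2\,|E|e^{-|E|}$ (a geometric weight that equals $-\tau\ln\tau$ at the ``coherent'' value $|E|=-\ln\tau$), or at least a comparable weight.

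Write $s_0:=-\ln\tau$ and split according to the size of $|E|$. \textbf{(A)} If $|E|\le\frac{7M_\tau}{8(1-\tau)}$, then $\iint_E|\mathcal Vf|^2-\tau|E|\le|E|(1-\tau)\le\frac78M_\tau$ (using $1-e^{-s}\le s$ and Faber--Krahn), and this is $\le(1-c\tau D[f]^2)M_\tau$ as soon as $2c\le\frac18$, i.e.\ $c\le\frac1{16}$. \textbf{(B)} If $|E|\ge s_0+\ln2$, write $|E|=s_0+u$ with $u\ge\ln2$; then $M_\tau-\Psi(|E|)=\tau(u-1+e^{-u})\ge\tau(\ln2-\tfrac12)$, which dominates $c\tau D[f]^2M_\tau$ provided $2c\le\ln2-\tfrac12$, in particular for $c\le\frac1{16}$. \textbf{(C)} On the remaining window $\frac{7M_\tau}{8(1-\tau)}<|E|<s_0+\ln2$, I would bound $\mathcal D$ from below by \cite{GGRT}: since $s\mapsto se^{-s}$ is unimodal, on this window $|E|e^{-|E|}\ge\min\bigl(s_-e^{-s_-},\ \tfrac12\tau(s_0+\ln2)\bigr)$ with $s_-:=\frac{7M_\tau}{8(1-\tau)}<\frac78$; using $\tau(1-\tau)\le\frac14$ for the first term and the elementary inequality $-\ln\tau\ge1-\tau$ for the second, both are $\ge\frac12\tau M_\tau$, so $\mathcal D\ge\frac{c_0}2\tau D[f]^2M_\tau$. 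Since $s_-<s_0+\ln2$, cases (A), (B), (C) cover all $|E|\ge0$, and together they yield the theorem with $c=\min\{\tfrac{c_0}2,\tfrac1{16}\}$.

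The step I expect to be the main obstacle --- and the content beyond the warm-up --- is verifying case (C): one must check that the window of levels $|E|$ in which neither the concavity slack $M_\tau-\Psi(|E|)$ nor the trivial bound does the job is \emph{precisely} the window in which the \cite{GGRT} deficit weight is comparable to $\tau M_\tau$, uniformly in $\tau\in(0,1)$; this is delicate near $\tau\to0$, where $|E|$ may be as large as $\sim1/\tau$ before the slack takes over, and near $\tau\to1$, where $M_\tau\sim(1-\tau)^2/2$ and the window collapses. Matching these windows is what pins down the optimal constant $\min\{c_0/2,1/16\}$ and, incidentally, explains why a factor of $\tau$ is genuinely needed in \eqref{SF} for $\Phi(u)=(u-\tau)_+$.
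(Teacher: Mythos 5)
Your proposal follows the same core mechanism as the paper's proof: apply the quantitative Faber--Krahn inequality of \cite{GGRT} to $E=\{|\mathcal Vf|^2>\tau\}$, and, in the regime where $|E|$ is so large that the resulting gain (proportional to $e^{-|E|}$) is small compared with $\tau$, switch to the slack lost in the bathtub rearrangement. Your case (B), with threshold $|E|\geq-\ln\tau+\ln 2$, is exactly the paper's second case $e^{-|E|}<\tau/2$, and the bound $\tau(u-1+e^{-u})\geq\tau(\ln 2-\tfrac12)$ plays the role of the paper's elementary estimate $s-\ln s-1\geq\tfrac12(1-s)^2$. The structural difference is bookkeeping: the paper keeps the multiplicative form, dropping the nonpositive term $-c_0e^{-|E|}D[f]^2\tau|E|$ and then relaxing $(1-c_0e^{-|E|}D[f]^2)\Psi(|E|)\leq(1-c_0e^{-|E|}D[f]^2)M_\tau$, so the surviving gain carries the full factor $M_\tau$ and the case $e^{-|E|}\geq\tau/2$ closes immediately with constant $c_0/2$. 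Your additive rewriting $\iint(|\mathcal Vf|^2-\tau)_+ = \Psi(|E|)-\mathcal D$ followed by $\Psi(|E|)\leq M_\tau$ retains only the weight $e^{-|E|}(1-e^{-|E|})$ in front of $c_0 D[f]^2$; since this vanishes as $|E|\to 0$ you need the extra small-$|E|$ case (A), which the paper avoids. Two points to tighten before this is a proof: the GGRT deficit bound, read off from the inequality quoted in the paper, is $\mathcal D\geq c_0 D[f]^2\, e^{-|E|}(1-e^{-|E|})$ and \emph{not} $c_0 D[f]^2\,|E|e^{-|E|}$ (indeed $1-e^{-s}\leq s$, so your conjectured weight is an overestimate and evaluates to $-\tau\ln\tau$ rather than $\tau(1-\tau)$ at $|E|=-\ln\tau$). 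Your case (C) still goes through with the correct weight, since $w(s_+)=\tfrac\tau2(1-\tfrac\tau2)\geq\tfrac12\tau M_\tau$ because $1-\tfrac\tau2-M_\tau=\tau(\tfrac12-\ln\tau)\geq 0$, and on $s_-\leq 7/8$ one has $1-e^{-s_-}\geq s_-/2$ so the estimate you wrote applies --- but the calculation needs to be done with $w$, not $se^{-s}$. Also, the coverage claim $s_-<s_0+\ln 2$ should be verified; it follows, e.g., from $M_\tau/(1-\tau)\leq 1-\tau$ together with the fact that $\ln(2/\tau)-\tfrac78(1-\tau)$ is decreasing on $(0,1)$ with value $\ln 2>0$ at $\tau=1$. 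Finally, your worry that ``$|E|$ may be as large as $\sim 1/\tau$ before the slack takes over'' is unfounded: case (B) already kicks in at $|E|\geq -\ln\tau+\ln 2\ll 1/\tau$. With these repairs your three-case argument is correct and yields the same constant $\min\{c_0/2,1/16\}$; it is a slightly heavier accounting of the identical idea.
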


We recall that the stability version of the Faber--Krahn inequality, due to \cite{GGRT}, reads
$$
\iint_\Omega |\mathcal Vf(x,\omega)|^2\,dx\,d\omega \leq \left(1- c_0 e^{-|\Omega|} D[f]^2 \right) \iint_{\Omega^*} |\mathcal V\phi(x,\omega)|^2\,dx\,d\omega \,.
$$

\begin{proof}
	We go through the above argument, setting again $E:=\{|\mathcal Vf|^2>\tau\}$. By the above stability version of the Faber--Krahn inequality, we obtain
	\begin{align*}
		\iint_{\R^2} \left( |\mathcal Vf(x,\omega)|^2 - \tau \right)_+dx\,d\omega
		& \leq (1 - c_0 e^{-|E|} D[f]^2) \iint_{E^*} |\mathcal V\phi(x,\omega)|^2 \,dx\,d\omega - \tau |E^*| \\
		& = (1 - c_0 e^{-|E|} D[f]^2) \iint_{E^*} \left( |\mathcal V\phi(x,\omega)|^2 - \tau \right) dx\,d\omega \\
		& \quad - c_0 e^{-|E|} D[f]^2 \tau |E| \\
		& \leq (1 - c_0 e^{-|E|} D[f]^2) \iint_{E^*} \left( |\mathcal V\phi(x,\omega)|^2 - \tau \right) dx\,d\omega \,.
	\end{align*}
	Note that, by expanding the square of the $L^2$ norm and optimizing over $c$,
	\begin{equation}
		\label{eq:distancesmall}
		D[f]^2 = 2 \left( 1- \sup_{z_0} |(\phi_{z_0},f)| \right) \leq 2 \,.
	\end{equation}
	Thus, $c_0 e^{-|E|} D[f]^2 \leq 2c_0$ and by assuming that $c_0\leq 1/2$, we may assume that the prefactor $1 - c_0 e^{-|E|} D[f]^2$ is nonnegative. Therefore we can argue as before and use
	\begin{equation}
		\label{eq:bathtub1}
		\iint_{E^*} \left( |\mathcal V\phi(x,\omega)|^2 - \tau \right) dx\,d\omega \leq
	\iint_{\R^2} \left( |\mathcal V\phi(x,\omega)|^2 - \tau \right)_+ dx\,d\omega \,.
	\end{equation}
	Thus, we have shown the bound
	\begin{equation}
		\label{eq:tauinitial}
		\iint_{\R^2} \left( |\mathcal Vf(x,\omega)|^2 - \tau \right)_+dx\,d\omega
		\leq (1 - c_0 e^{-|E|} D[f]^2) \iint_{\R^2} \left( |\mathcal V\phi(x,\omega)|^2 - \tau 	\right)_+ dx\,d\omega \,.
	\end{equation}
	This is almost of the form claimed in Theorem \ref{stabtau}, except that it contains $|E|$, which depends on $f$, and which we still need to bound.
	
	We set
	$$
	F_\tau := \{ |\mathcal V\phi|^2 >\tau \} \,.
	$$
	Using
	$$
	\mathcal V\phi(x,\omega) = e^{i\pi x\omega} e^{-\frac\pi 2(x^2+\omega^2)} \,,
	$$
	we see that $F_\tau$ is a disk with
	$$
	|F_\tau|=\ln\frac1\tau \,.
	$$
	
	We distinguish two cases. The first one is where
	$$
	e^{-|E|} \geq \frac12 e^{-|F_\tau|} = \frac\tau 2 \,.
	$$
	In this case, we can insert this bound into \eqref{eq:tauinitial} and obtain a bound of the form claimed in Theorem \ref{stabtau}.
	
	In the opposite case, we have
	$$
	e^{-|E|} < \frac12 e^{-|F_\tau|} \,,
	$$
	which we can write as
	\begin{equation}
		\label{eq:case2}
		e^{-|F_\tau|} - e^{-|E|} > \frac12 e^{-|F_\tau|}= \frac\tau 2 \,.
	\end{equation}
	In this case we only use the Faber--Krahn inequality, but not its stability version. Our goal is to improve on inequality \eqref{eq:bathtub1}. To do so, we write
	\begin{align}
		\label{eq:bathtubformula}
		\iint_{E^*} \left( |\mathcal V \phi|^2 - \tau \right) dx\,d\omega 
		& = \iint_{\R^2} \left( |\mathcal V \phi|^2 - \tau \right)_+ dx\,d\omega \notag \\
		& - \iint_{F_\tau \setminus E^*} \left( |\mathcal V \phi|^2 - \tau \right)_+ dx\,d\omega
		- \iint_{E^* \setminus F_\tau} \left( |\mathcal V \phi|^2 - \tau \right)_- dx\,d\omega \,.
	\end{align}
	Previously to arrive at \eqref{eq:bathtub1}, we dropped the last two terms. Now we keep them and we will get a remainder bound out of these.
	
	Note that $F_\tau$ and $E^*$ are concentric disks, so at most one of the two extra integrals is nonzero. Our assumption on $e^{-|E|} < \frac12 e^{-|F_\tau|}$ implies that $|E|> |F_\tau|$, so only the last term is nonzero. We define $R\geq 0$ by
	$$
	|E| = \pi R^2
	$$
	and compute explicitly using the above form of $\mathcal V\phi$. We obtain
	\begin{align}
		\label{eq:bathtub}
		\iint_{E^* \setminus F_\tau} \left( |\mathcal V \phi|^2 - \tau \right)_- dx\,d\omega
		& = 2\pi \int_{(\frac1\pi \ln\frac1\tau)^{1/2}}^R ( \tau - e^{-\pi r^2}) r\,dr \notag \\
		& = \int_{\ln\frac1\tau}^{\pi R^2} ( \tau - e^{-\rho})\,d\rho \notag \\
		& = e^{-|E|} - e^{-|F_\tau|} + \tau |E| - \tau |F_\tau| \,.
	\end{align}
	Now we use the elementary inequality
	\begin{equation*}
		s-\ln s - 1 \geq \frac12(1-s)^2
		\qquad\text{for all}\ 0<s\leq 1 \,,
	\end{equation*}
	which follows by writing
	\begin{align*}
		s-\ln s - 1 = \int_s^1 \left( \frac1t- 1\right)dt = \int_s^1 \int_t^1 \frac{du}{u^2}\,dt \geq \int_s^1 \int_t^1 du\,dt = \frac12(1-s)^2 \,.
	\end{align*}
	Taking $s = e^{-|E|} / e^{-|F_\tau|} = \tau e^{-|E|}$, we deduce that
	$$
	e^{-|E|} - e^{-|F_\tau|} + \tau |E| - \tau |F_\tau| \geq \frac1{2\tau} (e^{-|F_\tau|} - e^{-|E|})^2 \,.
	$$
	This, when inserted into \eqref{eq:bathtub}, gives
	\begin{equation}
		\label{eq:bathtubrem}
			\iint_{E^* \setminus F_\tau } \left( |\mathcal V \phi|^2 - \tau \right)_+ dx\,d\omega
		\geq \frac1{2\tau} (e^{-|F_\tau|} - e^{-|E|})^2 \,,
	\end{equation}
	and, returning to \eqref{eq:bathtubformula}, we find
	$$
	\iint_{E^*} \left( |\mathcal V \phi|^2 - \tau \right) dx\,d\omega 
	\leq \iint_{\R^2} \left( |\mathcal V \phi|^2 - \tau \right)_+ dx\,d\omega \\
	- \frac1{2\tau} (e^{-|F_\tau|} - e^{-|E|})^2 \,.
	$$
	This is the desired improvement of \eqref{eq:bathtub1}.
	
	Thus, if we go through the argument at the beginning of this section and use our improved bound, we get
	\begin{align*}
		\iint_{\R^2} (|\mathcal Vf|^2-\tau)_+\,dx\,d\omega 
		& \leq \iint_{E^*} (|\mathcal V\phi|^2-\tau)\, dx\,d\omega \\
		& \leq \iint_{\R^2} (|\mathcal V\phi|^2-\tau)_+ \,dx\,d\omega - \frac1{2\tau} (e^{-|F_\tau|} - e^{-|E|})^2 \,.
	\end{align*}
	According to our assumption \eqref{eq:case2} and the bound \eqref{eq:distancesmall}, we have
	$$
	\frac1{2\tau} (e^{-|F_\tau|} - e^{-|E|})^2 \geq \frac{\tau}{8} \geq \frac{\tau}{16} D[f]^2 \,.
	$$
	In order to write the inequality in a multiplicative form, we compute
	\begin{equation}
		\label{eq:entropytau}
		\iint_{\R^2} (|\mathcal V\phi|^2-\tau)_+ \,dx\,d\omega = 1 - \tau - \tau \ln \frac1\tau \,.
	\end{equation}
	In particular, $\iint_{\R^2} (|\mathcal V\phi|^2-\tau)_+ \,dx\,d\omega\leq 1$ and therefore
	$$
	\frac{\tau}{16} D[f]^2 \geq \frac{\tau}{16} D[f]^2 \iint_{\R^2} (|\mathcal V\phi|^2-\tau)_+ \,dx\,d\omega \,.
	$$
	This gives the desired bound in the second case.
\end{proof}


\subsection*{Stability for general $\Phi$}

With Theorem \ref{stabtau} at hand, it is easy to prove Theorem \ref{thm1} (recall that here $\rho=|f\rangle\langle f|$). We will prove the bound \eqref{SF} for pure states with constant
$$
c_\Phi := c \int_0^1 \tau (1-\tau - \tau \ln\frac1\tau)\, \Phi''(\tau)\,d\tau \,,
$$
where $c$ is the constant from Theorem \ref{stabtau}.

Indeed, when $\Phi'(0)>-\infty$, then this bound follows from the superposition formula \eqref{eq:superposition}, the expression \eqref{eq:entropytau} for the maximum and the stability bound for $\Phi(u)=(u-\tau)_+$ from Theorem \ref{stabtau}.

If $\Phi'(0)=-\infty$ we can apply the same approximation argument as at the beginning of this section. Alternative, in the original Wehrl case we can use the superposition formula \eqref{eq:superpositionwehrl}. In both ways we obtain the bound \eqref{eq:wehrlintro} with constant
$$
c_* := c \int_0^1 (1-\tau - \tau \ln\frac1\tau) \,d\tau = \frac c4 \,,
$$	
where again $c$ is the constant from Theorem \ref{stabtau}.


\section{A logarithmic Sobolev inequality for entire functions}\label{sec log-sobolev}

As before, we denote the two-dimensional Lebesgue measure in $\C$ by $dA(z)$ and introduce, for a parameter $h>0$, a probability measure on $\C$ by
$$
d\mu_h(z) = h^{-1} e^{-\pi |z|^2/h} \,dA(z) \,.
$$
By $\mathcal F_h^2$ we denote the Fock space, that is, the space of all entire functions that are square integrable with respect to $\mu_h$. This is a Hilbert space with respect to the inner product corresponding to the norm
$$
\| F \|_{\mathcal F^2_h} := \left( \int_\C |F(z)|^2\,d\mu_h(z) \right)^\frac12.
$$
The log-Sobolev inequality for entire functions states that
\begin{equation}
	\label{eq:logsob}
	\frac{h}{\pi} \int_\C |\partial_z F|^2\,d\mu_h(z)  \geq \int_\C |F(z)|^2 \ln |F(z)|^2 \,d\mu_h(z)
\end{equation}
for all $F\in\mathcal F^2_h$ with $\|F\|_{\mathcal F^2_h}=1$. Equality in \eqref{eq:logsob} holds if and only if $F(z) = c e^{\beta z-\frac{h}{2\pi}|\beta|^2}$ for some $\beta,c\in\C$ with $|c|=1$.

The log-Sobolev inequality \eqref{eq:logsob} is closely related to an improved hypercontractivity property found by Janson \cite{Ja}; see also \cite{Ja2,Ca,Zh}. By a standard differentiation argument, which appears for instance in \cite{Lu}, this improved hypercontractivity implies \eqref{eq:logsob}. (As an aside, we mention that while the cases of equality are understood for the hypercontractivity assertion, we do not see how this yields the cases of equality in \eqref{eq:logsob}, in contrast to what is claimed in \cite{Lu}. We will obtain the characterization of equality by other means below.) We also note that the reverse, namely going from a log-Sobolev inequality to an improved hypercontractivity statement is studied in \cite{Gr} in the context of complex manifolds.

The log-Sobolev inequality \eqref{eq:logsob} for entire functions should be compared with its ordinary form without analyticity. The latter looks similar, but with $|\partial_z F|^2$ replaced by $|\nabla F|^2$. Since $|\partial_z F|^2=\frac12|\nabla F|^2$ for holomorphic functions (by the Cauchy--Riemann equations), we see that \eqref{eq:logsob} is an improvement, by a factor of two, of the ordinary log-Sobolev inequality.

As a corollary of our main result, we will now prove a quantitative form of \eqref{eq:logsob}. A similar quantitative form of the ordinary log-Sobolev inequality was shown recently in \cite{DoEsFiFrLo}, but we do not see how the method of proof employed there can be extended to entire functions. For the large literature on quantitative versions of the ordinary log-Sobolev inequality with different measures of the distance to the optimizers we refer to the bibliography in \cite{DoEsFiFrLo}.

\begin{theorem}
	There is a $c_*>0$ such that, for all $F\in\mathcal F^2_h$ with $\|F\|_{\mathcal F^2_h}=1$,
	\begin{equation}\label{eq aggiunto}
	\frac{h}{\pi} \int_\C |\partial_z F|^2\,d\mu_h(z)  \geq \int_\C |F(z)|^2 \ln |F(z)|^2 \,d\mu_h(z)
	+ c_* \inf_{\beta\in\C,\, |c|=1} \| F - c e^{\beta z-\frac{h}{2\pi}|\beta|^2} \|_{\mathcal F^2_h}^2 \,.
	\end{equation}
	The constant $c_*$ coincides with the constant in Theorem \ref{wehrlintro}.
\end{theorem}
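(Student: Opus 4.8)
The plan is to deduce \eqref{eq aggiunto} directly from Theorem \ref{wehrlintro} via the Bargmann transform, after a preliminary scaling that reduces everything to the case $h=1$. First I would note that the substitution $z=\sqrt h\,w$ maps $\mathcal F^2_h$ isometrically onto $\mathcal F^2:=\mathcal F^2_1$; writing $G(w):=F(\sqrt h\,w)$, one checks that $\frac h\pi\int_\C|\partial_zF|^2\,d\mu_h=\frac1\pi\int_\C|\partial_wG|^2e^{-\pi|w|^2}\,dA(w)$, that $\int_\C|F|^2\ln|F|^2\,d\mu_h$ equals the corresponding integral for $G$, and that the exponentials $e^{\beta z-\frac h{2\pi}|\beta|^2}$ correspond to $e^{\gamma w-\frac1{2\pi}|\gamma|^2}$ via $\gamma=\sqrt h\,\beta$; so it is enough to treat $h=1$. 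Moreover, if $\int_\C|\partial_zF|^2e^{-\pi|z|^2}\,dA=+\infty$ there is nothing to prove, so I may assume this quantity is finite, which via the expansion $F=\sum_n a_n\pi^{n/2}z^n/\sqrt{n!}$ amounts to $\sum_n n|a_n|^2<\infty$ and makes all integrals below absolutely convergent.

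The second step is to recognize the two sides of \eqref{eq aggiunto} (at $h=1$) in terms of a density matrix. Given $F\in\mathcal F^2$ with $\|F\|_{\mathcal F^2}=1$, I would use the isometric correspondence already exploited around \eqref{eq ffock}: there is $f\in L^2(\R)$ with $\|f\|_2=1$ and $|\mathcal Vf(z)|^2=|F(z)|^2e^{-\pi|z|^2}$, under which the coherent states $\varphi_{(x_0,\omega_0)}$ correspond, up to a unimodular constant, exactly to the normalized exponentials $e^{\gamma z-\frac1{2\pi}|\gamma|^2}$, $\gamma\in\C$. Setting $\rho=|f\rangle\langle f|$, so $u_\rho(z)=|F(z)|^2e^{-\pi|z|^2}$, one has
\[
-\iint_{\R^2}u_\rho\ln u_\rho\,dx\,d\omega
=-\int_\C|F|^2e^{-\pi|z|^2}\ln|F|^2\,dA+\pi\int_\C|z|^2|F|^2e^{-\pi|z|^2}\,dA,
\]
and I would rewrite the last term by the elementary identity, valid whenever $\|F\|_{\mathcal F^2}=1$,
\[
\pi\int_\C|z|^2|F(z)|^2e^{-\pi|z|^2}\,dA(z)=\frac1\pi\int_\C|\partial_zF(z)|^2e^{-\pi|z|^2}\,dA(z)+1,
\]
which follows either from the monomial expansion of $F$ (both sides equal $\sum_n n|a_n|^2+1$) or from two integrations by parts using $\partial_{\bar z}F=0$. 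Combining the two displays shows that the left-hand side of \eqref{eq aggiunto} minus $\int_\C|F|^2\ln|F|^2\,d\mu_1$ equals exactly $\bigl(-\iint_{\R^2}u_\rho\ln u_\rho\,dx\,d\omega\bigr)-1$, i.e.\ the excess of the Wehrl entropy of $\rho$ over its minimal value $1$.

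To finish, I would apply Theorem \ref{wehrlintro} to $\rho$, which bounds this excess below by $c_*D[\rho]^2$, and then compare $D[\rho]^2$ with the remainder in \eqref{eq aggiunto}. By the same isometric correspondence, $\inf_{\beta\in\C,\,|c|=1}\|F-ce^{\beta z-\frac1{2\pi}|\beta|^2}\|_{\mathcal F^2}^2=\inf_{(x_0,\omega_0)\in\R^2,\,|c|=1}\|f-c\varphi_{(x_0,\omega_0)}\|_2^2=D[f]^2$ in the notation of Remark \ref{rem pure}, where it is shown that $D[f]\le\frac1{\sqrt2}D[\rho]$, hence $D[f]^2\le\frac12D[\rho]^2\le D[\rho]^2$. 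This yields \eqref{eq aggiunto} with the very same constant $c_*$ as in Theorem \ref{wehrlintro}.

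I do not expect a serious obstacle here: the argument is largely a translation through the Bargmann dictionary, and the only genuine computations — the second-moment/Dirichlet-form identity and the reconciliation of the trace-norm distance $D[\rho]$ with the $\mathcal F^2$-distance — are both elementary. The one point that needs a little care is keeping track of normalizations and phases in the correspondence between $L^2(\R)$ and $\mathcal F^2$, and in the reduction to $h=1$.
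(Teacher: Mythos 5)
Your proposal is correct and follows essentially the same route as the paper's own proof: scale to $h=1$, use the second-moment/Dirichlet-form identity $\pi\int_\C|z|^2|F|^2\,d\mu=\frac1\pi\int_\C|\partial_z F|^2\,d\mu+1$ to rewrite the left-hand side, translate via the Bargmann correspondence so that the excess becomes the Wehrl entropy deficit of $\rho=|f\rangle\langle f|$, and then invoke Theorem \ref{wehrlintro}. One small point in your favour: the paper states that the translated inequality ``is precisely the assertion of Theorem \ref{wehrlintro}'', which elides the fact that the remainder there is $c_*D[\rho]^2$ with $D[\rho]$ the trace-norm distance, whereas after translation one has the squared $\mathcal F^2$-distance $D[f]^2$; your explicit use of the comparison $D[f]\le\frac1{\sqrt2}D[\rho]$ from Remark \ref{rem pure} supplies the (easy but genuinely needed) step that the paper leaves implicit.
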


The argument that follows is inspired by the proof of the sharp Lieb--Wehrl inequality in \cite{Lu}. There the log-Sobolev inequality \eqref{eq:logsob} is used to deduce the Lieb--Wehrl inequality. In contrast, we use our quantitative version of the Lieb--Wehrl inequality to obtain a quantitative version of the log-Sobolev inequality. The underlying idea is that the Lieb--Wehrl inequality is equivalent to the log-Sobolev inequality for entire functions.

\begin{proof}
	By scaling it suffices to prove the theorem for a single value of $h>0$. We choose $h=1$ and abbreviate $\mathcal F^2=\mathcal F^2_1$ and $\mu=\mu_1$. 
	
	In the proof of \eqref{eq aggiunto} we can of course assume that $\partial_z F\in \mathcal F^2$.  	
	We note that if $F\in \mathcal F^2$ and $\partial_z F\in \mathcal F^2$ we have
	$$
	- \frac{1}{\pi} \int_\C |\partial_z F|^2\,d\mu(z) = - \pi \int_\C |z|^2 |F|^2\,d\mu(z) + \int_\C |F|^2\,d\mu(z) \,.
	$$
	Indeed, this follows by integrating by parts twice, with the measure $\mu$ replaced by $e^{-(\pi+\varepsilon) |z|^2}dA(z)$, $\varepsilon>0$, and then letting $\varepsilon \to 0$, using the monotone convergence theorem. 
	
Thus, the inequality in the theorem is equivalent to the inequality
	$$
	- \int_\C |F(z)|^2 \ln( e^{-\pi |z|^2} |F(z)|^2)\,d\mu(z) - 1 \geq c_* \inf_{\beta\in\C,\, |c|=1} \| F - c e^{\beta z-\frac{1}{2\pi}|\beta|^2} \|_{\mathcal F^2}^2 \,.
	$$
	for all $F\in\mathcal F^2$ with $\|F\|_{\mathcal F^2}=1$.
	
	Recall that the operator $\mathcal V$ was defined in the introduction. For $f\in L^2(\R)$, we can write
	$$
	\mathcal Vf(x_0,\omega_0) = e^{-\frac\pi2(x_0^2+\omega_0^2)} e^{-i\pi x_0\omega_0} F(x_0-i\omega_0)
	$$
	with
	$$
	F(z) := 2^\frac14 \int_\R e^{2\pi z x-\frac\pi 2 z^2-\pi x^2} f(x)\,dx
	\qquad\text{for all}\ z\in\C \,.
	$$
	It is well known and easy to see that $F$ is entire and that $\|F\|_{\mathcal F^2} = \|f\|_{L^2(\R)}$. Moreover, one can show that the map $L^2(\R)\ni f\mapsto F \in \mathcal F^2$ is onto. Note that $|\mathcal Vf(x_0,\omega_0)|^2 = e^{-\pi(x_0^2+\omega_0^2)} |F(x_0-i\omega_0)|^2$.
	
	Also, a simple computation shows that the $F=\Phi_{(x_0,\omega_0)}$ corresponding to $f=\phi_{(x_0,\omega_0)}$ is equal to
	$$
	\Phi_{(x_0,\omega_0)}(z) = e^{i\pi x_0\omega_0} e^{-\frac\pi2(x_0^2+\omega_0^2)} e^{\pi(x_0+i\omega_0)z} 
	\qquad\text{for all}\ z\in\C \,.
	$$
	Note that this function is of the form $c_\beta e^{\beta z-\frac1{2\pi}|\beta|^2}$ with $|c_\beta|=1$, that is, it is an optimizer for the log-Sobolev inequality. Moreover, as $(x_0,\omega_0)$ ranges through $\R^2$, this family of functions ranges over all optimizers (up to constant phases).
	
	Thus, the inequality in the theorem is equivalent to the inequality
	$$
	- \iint_{\R^2} |\mathcal Vf(x_0,\omega_0)|^2\ln |\mathcal Vf(x_0,\omega_0)|^2 \,dx_0\,d\omega_0 -1 \geq c_* \inf_{x_0,\omega_0\in\R,\, |c'|=1} \| f - c' \phi_{(x_0,\omega_0)} \|_{L^2(\R)}^2 \,.
	$$
	This is precisely the assertion of Theorem \ref{wehrlintro} for $\rho=|f\rangle\langle f|$.
\end{proof}


\section{Acknowledgments} F.~N.\ and P.~T.\ would like to thank Aleksei Kulikov and Joaquim Ortega-Cerd\`a for useful discussions on the topic of this paper. 


\bibliographystyle{amsalpha}

\end{document}